\pgfplotsset{compat=newest}
\pgfplotsset{plot coordinates/math parser=false}
\newlength{\figwidth}
\newlength{\figheight}
\definecolor{gray1}{gray}{0.0}
\definecolor{gray2}{gray}{0.25}
\definecolor{gray3}{gray}{0.5}
\definecolor{gray4}{gray}{0.7}
\definecolor{gray5}{gray}{0.9}
\newlength{\prel}\setlength{\prel}{0.1cm} 
\pgfplotsset{
  title style = {font=\small}
}
\declaretheorem[Refname={Theorem,Theorems}]{theorem}
\numberwithin{theorem}{section}
\declaretheorem[numberlike=theorem,Refname={Corollary,Corollaries}]{corollary}
\declaretheorem[numberlike=theorem,Refname={Proposition,Propositions}]{proposition}
\declaretheorem[style=definition,numberlike=theorem,Refname={Definition,Definitions}]{definition}
\declaretheorem[style=definition,numberlike=theorem,Refname={Example,Examples}]{example}
\declaretheorem[style=definition,numberlike=theorem,Refname={Remark,Remarks}]{remark}
\DeclarePairedDelimiter\abs{\lvert}{\rvert} 
\DeclarePairedDelimiter\norm{\lVert}{\rVert} 
\newcommand{\R}{\mathbb{R}} 
\newcommand{\N}{\mathbb{N}} 
\DeclareMathOperator*{\argmin}{arg\,min}    
\newcommand{\polspace}{\pi} 
\title{A Bayes--Sard Cubature Method}
\author{
  Toni Karvonen \\
  Aalto University, Finland \\
  \And
  Chris. J. Oates \\
  Newcastle University, UK \\
  \And
  Simo S\"{a}rkk\"{a} \\
  Aalto University, Finland \\
}
\begin{document}

\maketitle

\begin{abstract}
This paper focusses on the formulation of numerical integration as an inferential task.
To date, research effort has largely focussed on the development of \emph{Bayesian cubature}, whose distributional output provides uncertainty quantification for the integral.
However, the point estimators associated to Bayesian cubature can be inaccurate and acutely sensitive to the prior when the domain is high-dimensional.
To address these drawbacks we introduce \emph{Bayes--Sard cubature}, a probabilistic framework that combines the flexibility of Bayesian cubature with the robustness of classical cubatures which are well-established.
This is achieved by considering a Gaussian process model for the integrand whose mean is a parametric regression model, with an improper flat prior on each regression coefficient.
The features in the regression model consist of test functions which are guaranteed to be exactly integrated, with remaining degrees of freedom afforded to the non-parametric part.
The asymptotic convergence of the Bayes--Sard cubature method is established and the theoretical results are numerically verified.
In particular, we report two orders of magnitude reduction in error compared to Bayesian cubature in the context of a high-dimensional financial integral.
\end{abstract}

\section{Introduction}

This paper considers the numerical approximation of an integral $I(f^\dagger) \coloneqq \int_D f^\dagger \mathrm{d}\nu$ of a continuous integrand $f^\dagger : D \rightarrow \mathbb{R}$ against a Borel distribution $\nu$ defined on a domain $D \subseteq \mathbb{R}^d$.
The approximation of such integrals is a fundamental task in applied mathematics, statistics and machine learning.
Indeed, the scope and ambition of modern scientific and industrial computer codes is such that the integrand $f^\dagger$ can often represent the output of a complex computational model.
In such cases the evaluation of the integrand is associated with a substantial resource cost and, as a consequence, the total number of evaluations will be limited.
The research challenge, in these circumstances, manifests not merely in the design of a cubature method but also in the assessment of the associated error.

The (generalised) \emph{Bayesian cubature} (BC) method \cite{Larkin1972,OHagan1991,Minka2000} provides a statistical approach to error assessment.
In brief, let $\Omega$ be a probability space and consider a hypothetical Bayesian agent who represents their epistemic uncertainties in the form of a stochastic process $f \colon D \times \Omega \rightarrow \mathbb{R}$.
This stochastic process must arise from a Bayesian regression model and be consistent with obtained evaluations of the true integrand, typically provided on a discrete point set $\{x_i\}_{i=1}^n \subset D$; that is $f(x_i , \omega) = f^\dagger(x_i)$ for almost all $\omega \in \Omega$.
The stochastic process acts as a stochastic model for the integrand $f^\dagger$, implying a random variable $\omega \mapsto \int_D f(\cdot,\omega) \mathrm{d}\nu$ that represents the agent's epistemic uncertainty for the value of the integral $I(f^\dagger)$ of interest.

The output of a (generalised) Bayesian cubature method is the law of the random variable \sloppy{${\omega \mapsto \int_D f(\cdot,\omega) \mathrm{d}\nu}$}.
The mean of this output provides a point estimate for the integral, whilst the standard deviation indicates the extent of the agent's uncertainty regarding the integral.
The properties of this probabilistic output have been explored in detail for the case of a centred Gaussian stochastic process (the \emph{standard} Bayesian cubature method):
In certain situations the mean has been shown to coincide with a kernel-based integration method~\cite{Oettershagen2017} that is rate-optimal \cite{Bach2017,Briol2017}, robust to misspecification of the agent's belief \cite{Kanagawa2016,Kanagawa2017} and efficiently computable \cite{Oettershagen2017,Karvonen2018}.
The non-Gaussian case and related extensions have been explored empirically in \cite{Osborne2012,Gunter2014,Oates2017,Chai2018}.
The method has also been discussed in connection with \emph{probabilistic numerics}; see \cite{Diaconis1988,Hennig2015,Cockayne2017} for general background.

However, it remains the case that non-probabilistic numerical integration methods, such as Gaussian cubatures~\cite{Gautschi2004} and quasi-Monte Carlo methods~\cite{Hickernell1998}, are more widely used, due in part to how their ease-of-use or reliability are perceived.
This is despite the well-known fact that the trapezoidal rule and other higher-order spline methods~\cite{Davis2007} can be naturally cast as Bayesian cubatures if the stochastic process $f$ is selected suitably~\cite{Diaconis1988}.
It is also known that Gaussian cubature can be viewed as a special (in fact, degenerate) case of a kernel method~\cite{Sarkka2016,Karvonen2017a}.
However, no overall framework to derive probabilistic analogies of popular cubatures, with corresponding ease-of-use and reliability, has yet been developed.

This paper argues that the perceived performance gap between probabilistic and non-probabilistic methods should be reconsidered.
To this end, we consider a non-parametric Bayesian regression model augmented with a parametric component. 
The features in the parametric component, that is the pre-specified finite set of basis functions, will be denoted $\pi$. 
Then, an improper uniform prior limit on the regression coefficients (see~\cite{OHagan1978} and \cite[Sec.\ 2.7]{Rasmussen2006}) is studied. 
This gives rise to \emph{Bayes--Sard cubature}\footnote{Our terminology is motivated by resemblance to the (non-probabilistic) method of Sard~\cite{Sard1949} for selecting weights for given $n$ nodes by fixing a polynomial space of degree $m < n$ on which the integration rule must be exact and disposing of the remaining $n-1-m$ degrees of freedom by minimising an appropriate error functional. See also \cite{Schoenberg1964} and \cite{Larkin1970}.} (BSC), which differs at a fundamental level to standard Bayesian cubature, in that the functions in $\pi$ are now exactly integrated.
The extension is similar to that proposed in 1974 by Larkin~\cite{Larkin1974}, and non-probabilistic versions have appeared independently in~\cite{Bezhaev1991,DeVore2017} in the context of interpolation with conditionally positive definite kernels and optimal approximation in reproducing kernel Hilbert spaces.
Our contributions therefore include (i) establishing a coherent and comprehensive Gaussian process framework for Bayes--Sard cubature; (ii) rigorous study of convergence and conditions that need to be established on $\pi$; (iii) empirical experiments that demonstrate improved accuracy in high dimensions and robustness to misspecified kernel parameters compared to Bayesian cubature; and (iv) the important observation that, when the dimension of the function space $\pi$ matches the number of cubature nodes, the Bayes--Sard cubature method can be used to endow \emph{any} cubature rule with a meaningful probabilistic output.

\section{Methods} \label{sec: methods}

This section contains our novel methodological development, which begins with specifying a Bayesian regression model for the integrand.

\subsection{A Bayesian Regression Model} \label{subsec: GP section}

This section serves to set up a generic Bayesian regression framework, which is essentially identical to that described in~\cite{OHagan1978}. See also~\cite[Section 2.7]{Rasmussen2006} and~\cite{MosamamKent2010}.
This will act as the stochastic model $f : D \times \Omega \rightarrow \mathbb{R}$ for our subsequent development.

\subsubsection{Gaussian Process Prior} \label{subsubsec: GP prior}

Recall that a \emph{Gaussian process} (GP) is a function-valued random variable $\omega \mapsto f(\cdot , \omega)$ such that \sloppy{${f(\cdot , \omega) \in C^0(D)}$} and $\omega \mapsto Lf(\cdot , \omega)$ is a (univariate) Gaussian for all continuous linear functionals $L$ on $C^0(D)$.
Here $\omega$ denotes a generic element of an underlying probability space $\Omega$.
See \cite{Bogachev1998} for further background.
Following the notational convention in \cite{Rasmussen2006}, we suppress the argument $\omega$ and denote by $f(x) \sim \mathcal{GP}(s(x),k(x,x'))$ a Gaussian process with mean function $s \in C^0(D)$ and positive definite covariance kernel $k \in C^0(D \times D)$.
The characterising property of this Gaussian process is that $f(x_1),\ldots,f(x_n)$ are jointly Gaussian with the mean vector $[s(x_1),\ldots,s(x_n)]$ and covariance matrix $[K]_{ij} = k(x_i,x_j)$ for all sets $X = \{x_1,\dots,x_n\} \subset D$.

Our starting point in this paper will be to endow a hypothetical Bayesian agent with the following prior model for the integrand:
\begin{definition}[Prior] \label{def:Prior}
Let $\pi$ denote a finite-dimensional linear subspace of real-valued functions on~$D$ and $\{p_1,\dots,p_Q\}$ a basis of $\pi$, so that $Q = \dim(\pi)$.
Then, for some positive definite covariance matrix $\Sigma \in \R^{Q \times Q}$, we consider the following hierarchical \emph{prior} model:
\begin{equation*}
f(x) \mid \gamma \sim \mathcal{GP}\big( s(x) , k(x,x') \big), \hspace{0.5cm} s(x) = \sum_{j=1}^Q \gamma_j p_j(x), \hspace{0.5cm} \gamma \sim \mathcal{N}(0,\Sigma).
\end{equation*}
\end{definition}
\noindent The mean function $s \in \pi$ is parametrised by $\gamma_1,\dots,\gamma_Q \in \mathbb{R}$.
Such a prior could arise, for example, when a parametric linear regression model is assumed and a non-parametric discrepancy term added to allow for misspecification of the parametric part \cite{Kennedy2001}.
Note that a non-zero mean $\eta \in \R^Q$ could be specified for $\gamma$; this is done in the derivations contained in supplementary material.

\subsubsection{Gaussian Process Posterior} \label{subsubsec: GP posterior}

In a regression context, the data consist of input-output pairs $\mathcal{D}_X = \{(x_i,f^\dagger(x_i))\}_{i=1}^n$, based on a finite point set $X$ that, in this paper, is considered fixed.
Our interest is in the Bayesian agent's posterior distribution, after the data $\mathcal{D}_X$ are observed.
Let $f_X$ (resp.\ $f_X^\dagger$) denote the column vector with entries $f(x_i)$ (resp.\ $f^\dagger(x_i)$).
The \emph{posterior} is defined as the law of the stochastic process which is obtained when the prior distribution is restricted to the set \sloppy{${\{\omega \in \Omega : f_X = f_X^\dagger\}}$}.
That the posterior, denoted $f \mid \mathcal{D}_X$, is again a Gaussian stochastic process is a well-known result (for technical details, see e.g.\@~\cite{Owhadi2015}).

Let $p(x)$ be the row vector with entries $p_j(x)$ and let $P_X$ denote the $n \times Q$ \emph{Vandermonde matrix} with $[P_X]_{i,j} = p_j(x_i)$.
Let $k_X(x)$ denote the row vector with entries $k(x,x_j)$ and let $K_X$ denote the \emph{kernel matrix} with $[K_X]_{i,j} = k(x_i,x_j)$.
For the prior in Def.\ \ref{def:Prior} we have the following result:
\begin{theorem}[Posterior] \label{thm:PosteriorMean}
In the posterior, $f(x) \mid \mathcal{D}_X \sim \mathcal{GP}\big(s_{X,\Sigma}(f^\dagger)(x) , k_{X,\Sigma}(x,x') \big)$ where
\begin{align}
\begin{split} \label{eqn:PosteriorMean}
s_{X,\Sigma}(f^\dagger)(x) ={}& k_X(x) \alpha + p(x) \beta  \\
={}& [k_X(x) + p(x) \Sigma P_X^\top][K_X + P_X \Sigma P_X^\top]^{-1} f_X^\dagger, 
\end{split} \\
\begin{split} \label{eqn:PosteriorCov}
k_{X,\Sigma}(x,x') ={}& k(x,x') + p(x) \Sigma p(x')^\top \\
& - [k_X(x) + p(x) \Sigma P_X^\top] [K_X + P_X \Sigma P_X^\top]^{-1} [k_X(x') + p(x') \Sigma P_X^\top]^\top 
\end{split}
\end{align}
and the coefficients $\alpha$ and $\beta$ are defined via the invertible linear system
\begin{equation}
\left[ \begin{array}{cc} K_X & P_X \\ P_X^\top & - \Sigma^{-1} \end{array} \right] \left[ \begin{array}{c} \alpha \\ \beta \end{array} \right] = \left[ \begin{array}{c} f_X^\dagger \\ 0 \end{array} \right]. \label{eqn:FirstLinearSystem} 
\end{equation}
\end{theorem}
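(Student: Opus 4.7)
My plan is to reduce the posterior computation to the standard Gaussian process conditioning identity applied to an appropriately marginalised prior, and then to verify the equivalent linear-system form by direct block elimination.

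First I would integrate out the hyperparameter $\gamma$ to obtain the marginal prior. Writing the prior sample as $f(x) = g(x) + p(x)\gamma$, where $g \sim \mathcal{GP}(0, k(x,x'))$ is independent of $\gamma \sim \mathcal{N}(0,\Sigma)$, standard properties of Gaussian measures give that $f$ is itself a Gaussian process with mean zero and covariance
\begin{equation*}
\tilde{k}(x,x') \;\coloneqq\; k(x,x') + p(x) \Sigma p(x')^\top.
\end{equation*}
In matrix-vector notation at the design points this means $\operatorname{Cov}(f_X, f(x)^\top) = k_X(x)^\top + P_X \Sigma p(x)^\top$ and $\operatorname{Cov}(f_X,f_X) = K_X + P_X \Sigma P_X^\top \eqqcolon \tilde{K}_X$.

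Next I would invoke the classical GP conditioning formula for a zero-mean Gaussian process with covariance $\tilde{k}$: the conditional law of $f$ given $f_X = f_X^\dagger$ is again a Gaussian process with mean $\tilde{k}_X(x)\tilde{K}_X^{-1} f_X^\dagger$ and covariance $\tilde{k}(x,x') - \tilde{k}_X(x)\tilde{K}_X^{-1}\tilde{k}_X(x')^\top$. Substituting the expressions for $\tilde{k}$ and $\tilde{K}_X$ yields exactly the explicit formulas \eqref{eqn:PosteriorMean} (second line) and \eqref{eqn:PosteriorCov}. The well-definedness here rests on $\tilde{K}_X$ being positive definite, which holds because $K_X$ is positive definite and $P_X \Sigma P_X^\top$ is positive semidefinite.

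To recover the linear-system representation \eqref{eqn:FirstLinearSystem} and the first line of \eqref{eqn:PosteriorMean}, I would show that the block matrix is invertible and solve it by block elimination. Invertibility follows from a Schur complement argument: the $(2,2)$ block $-\Sigma^{-1}$ is invertible, and its Schur complement in the full matrix is exactly $K_X + P_X \Sigma P_X^\top = \tilde K_X$, which we just argued is positive definite. Eliminating $\beta$ via the second block equation $P_X^\top \alpha = \Sigma^{-1}\beta$ gives $\beta = \Sigma P_X^\top \alpha$, and substituting into the first block equation yields $\tilde K_X \alpha = f_X^\dagger$, hence $\alpha = \tilde K_X^{-1} f_X^\dagger$ and $\beta = \Sigma P_X^\top \tilde K_X^{-1} f_X^\dagger$. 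Plugging these back into $k_X(x)\alpha + p(x)\beta$ reproduces $[k_X(x) + p(x)\Sigma P_X^\top]\tilde K_X^{-1} f_X^\dagger$, matching the second line of \eqref{eqn:PosteriorMean}.

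The only real obstacle is the marginalisation step, since it is a statement about Gaussian measures on an infinite-dimensional space rather than merely about finite-dimensional marginals; however, because the characterisation of a GP used in the paper is through all continuous linear functionals and these are Gaussian in both $g$ and $\gamma$, the argument goes through without further work. Everything after that is routine linear algebra.
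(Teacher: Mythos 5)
Your proposal is correct and follows essentially the same route as the paper's proof: marginalise out $\gamma$ to obtain the prior $f \sim \mathcal{GP}(0,\, k(x,x') + p(x)\Sigma p(x')^\top)$, apply the standard conditioning formulae, and then identify the linear-system form, with invertibility reduced (via Schur complement in your case, via the block determinant in the paper's) to the positive definiteness of $K_X + P_X \Sigma P_X^\top$. The only cosmetic difference is that you derive $\alpha,\beta$ by block elimination whereas the paper states them explicitly and verifies they satisfy the system by substitution.
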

\noindent The proofs for all results are contained in the supplementary material, unless otherwise stated.
Note that the posterior is consistent with the data $\mathcal{D}_X$, in the sense that the posterior mean $s_{X,\Sigma}(f^\dagger)(x)$ coincides with the value $f^\dagger(x)$ at the locations $x \in X$ and, moreover, the posterior variance vanishes at each $x \in X$.
These facts imply that sample paths from $f \mid \mathcal{D}_X$ almost surely satisfy $f_X = f_X^\dagger$.

\begin{remark}[Standard Bayesian cubature; BC] \label{rem: BC}
Based on Eqns.\ \ref{eqn:PosteriorMean} and \ref{eqn:PosteriorCov}, it is apparent that if we set $\pi = \emptyset$, then the posterior reduces to a Gaussian process with mean and covariance
\begin{equation*}
s_{X,0}(f^\dagger)(x) = k_X(x) K_X^{-1} f_X^\dagger, \hspace{0.5cm} k_{X,0}(x,x') = k(x,x') - k_X(x) K_X^{-1} k_X(x')^\top .
\end{equation*}
This is precisely the stochastic process used in standard Bayesian cubature~\cite{Minka2000,Briol2017}.
\end{remark}

The need for the Bayesian agent to elicit a covariance matrix $\Sigma$ appears to prevent automatic use of this prior model.
For this reason, we consider the \emph{flat prior limit} as $\Sigma^{-1} \rightarrow 0$, which corresponds to a particular encoding of an absence of prior information about the value of the parameter $\gamma$ in Def.\ \ref{def:Prior}.

\subsubsection{Flat Prior Limit}\label{sec:GPFlatPrior}

In this section we ask whether the Gaussian process posterior is well-defined in the flat prior limit $\Sigma^{-1} \to 0$.
For this, we need the concept of unisolvency \cite[Sec.\ 2.2]{Wendland2005}:

\begin{definition}[Unisolvency] \label{def:Unisolvent}
Let $\pi$ denote a finite-dimensional linear subspace of real-valued functions on $D$.
A point set $X = \{x_1,\ldots,x_n\} \subset D$ with $n \geq \dim(\polspace)$ is called $\polspace$-\emph{unisolvent} if the zero function is the only element in $\polspace$ that vanishes on $X$.
(Examples are provided in Sec. \ref{sec:unisolvency} of the supplement.)
\end{definition}

\begin{theorem}[Flat prior limit] \label{thm:LimitInterpolant}
Assume that $X$ is a $\pi$-unisolvent set.
For the prior in Def.\@~\ref{def:Prior}, and in the limit $\Sigma^{-1} \rightarrow \mathrm{0}$, we have that $s_{X,\Sigma}(f^\dagger) \rightarrow s_X(f^\dagger)$ and $k_{X,\Sigma} \rightarrow k_X$ pointwise, where
\begin{align}
s_X(f^\dagger)(x) ={}& k_X(x) \alpha + p(x) \beta, \label{eqn:FlatPriorMean}\\
\begin{split}\label{eqn:FlatPriorCov}
k_X(x,x') ={}& k(x,x') - k_X(x) K_X^{-1} k_X(x')^\top \\
& + \big[k_X(x) K_X^{-1} P_X  - p(x) \big] [P_X^\top K_X^{-1} P_X]^{-1} \big[k_X(x') K_X^{-1} P_X  - p(x') \big]^\top, 
\end{split}
\end{align}
and the coefficients $\alpha$ and $\beta$ are defined via the invertible linear system
\begin{equation}
\left[ \begin{array}{cc} K_X & P_X \\ P_X^\top & 0 \end{array} \right] \left[ \begin{array}{c} \alpha \\ \beta \end{array} \right] = \left[ \begin{array}{c} f_X^\dagger \\ 0 \end{array} \right].  \label{eqn:ThirdLinearSystem}
\end{equation}
\end{theorem}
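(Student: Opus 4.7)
The plan is to establish invertibility of the limiting saddle-point matrix using $\pi$-unisolvency, deduce convergence of the coefficients $(\alpha,\beta)$ by continuity of matrix inversion, and then handle the covariance separately via a Woodbury identity that resolves the apparent $\infty-\infty$ indeterminate form as $\Sigma^{-1}\to 0$.

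First I would verify that the limiting bordered matrix $M = \bigl(\begin{smallmatrix} K_X & P_X \\ P_X^\top & 0 \end{smallmatrix}\bigr)$ is invertible. Suppose $M(\alpha^\top,\beta^\top)^\top = 0$; then $K_X\alpha + P_X\beta = 0$ and $P_X^\top \alpha = 0$, and left-multiplying the first equation by $\alpha^\top$ and using the second yields $\alpha^\top K_X \alpha = 0$, so $\alpha = 0$ by positive definiteness of $k$. The remaining equation $P_X\beta = 0$ says that $\sum_j \beta_j p_j \in \pi$ vanishes on $X$, so $\pi$-unisolvency forces $\beta = 0$. Hence \eqref{eqn:ThirdLinearSystem} has a unique solution, and because matrix inversion is continuous on the open set of invertible matrices, the solution $(\alpha,\beta)$ of \eqref{eqn:FirstLinearSystem} converges to that of \eqref{eqn:ThirdLinearSystem} as $\Sigma^{-1} \to 0$. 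This gives pointwise convergence of the first form $k_X(x)\alpha + p(x)\beta$ of the posterior mean to \eqref{eqn:FlatPriorMean}.

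The covariance is the nontrivial part. Let $A \coloneqq P_X^\top K_X^{-1} P_X$, which is invertible by the same argument (positive definite $K_X$, full column rank $P_X$). The key tool is the Sherman--Morrison--Woodbury identity
\[
[K_X + P_X \Sigma P_X^\top]^{-1} = K_X^{-1} - K_X^{-1} P_X [\Sigma^{-1} + A]^{-1} P_X^\top K_X^{-1},
\]
which replaces the problematic $\Sigma$ with $\Sigma^{-1}$ that vanishes in the limit. Two algebraic telescopings, both based on the identity $I - A(\Sigma^{-1}+A)^{-1} = \Sigma^{-1}(\Sigma^{-1}+A)^{-1}$, do the rest. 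The pure-$p$ contribution simplifies as
\[
p(x)\Sigma p(x')^\top - p(x) \Sigma P_X^\top [K_X + P_X \Sigma P_X^\top]^{-1} P_X \Sigma p(x')^\top = p(x)(\Sigma^{-1}+A)^{-1} p(x')^\top,
\]
which tends to $p(x) A^{-1} p(x')^\top$; the cross terms use $[K_X + P_X\Sigma P_X^\top]^{-1} P_X \Sigma = K_X^{-1} P_X (\Sigma^{-1}+A)^{-1} \to K_X^{-1} P_X A^{-1}$; and the pure-$k_X$ block converges to $k_X(x)[K_X^{-1} - K_X^{-1} P_X A^{-1} P_X^\top K_X^{-1}] k_X(x')^\top$. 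Assembling these pieces and recognising a completed square in $k_X(x)K_X^{-1}P_X - p(x)$ reproduces \eqref{eqn:FlatPriorCov}.

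The main obstacle is precisely this covariance computation: naive substitution into \eqref{eqn:PosteriorCov} yields two divergent contributions that must be shown to cancel exactly. The Woodbury identity is the correct vehicle because it is the unique rearrangement in which $\Sigma$ enters only through $\Sigma^{-1}$, so every factor admits a well-defined limit before recombination.
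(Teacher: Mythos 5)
Your proposal is correct and follows essentially the same route as the paper: the mean is handled by passing to the limit in the bordered linear system (whose invertibility both arguments reduce to $P_X$ having full column rank via unisolvency), and the covariance is handled by the Woodbury identity applied to $[K_X + P_X\Sigma P_X^\top]^{-1}$ with $L_X = P_X^\top K_X^{-1}P_X$. The only substantive difference is that where the paper expands $[\Sigma^{-1}+L_X]^{-1}$ in a Neumann series and tracks $\mathcal{O}(\Sigma^{-1})$ remainders through the cancellation, you use the exact identity $I - L_X(\Sigma^{-1}+L_X)^{-1} = \Sigma^{-1}(\Sigma^{-1}+L_X)^{-1}$ to rewrite each apparently divergent term in a form with a manifestly finite limit --- a slightly cleaner execution of the same computation.
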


The following observation, illustrated in Fig.\ \ref{fig:example_gp}, will be important:

\begin{proposition}\label{thm:PolyRepro}
Assume that $X$ is a $\pi$-unisolvent set. 
Then \sloppy{${s_X(p) = p}$} whenever $p \in \pi$.
\end{proposition}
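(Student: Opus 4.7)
The plan is to exploit the structure of the block linear system in Equation~\ref{eqn:ThirdLinearSystem} and exhibit an explicit solution when the data come from $f^\dagger \in \pi$.

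First I would fix $p \in \pi$ and expand it in the basis as $p = \sum_{j=1}^Q c_j p_j$ for some coefficient vector $c \in \mathbb{R}^Q$. Evaluating at the nodes gives $f_X^\dagger = P_X c$, where $P_X$ is the Vandermonde matrix from Theorem~\ref{thm:PosteriorMean}. The defining system for the coefficients $(\alpha,\beta)$ in Equation~\ref{eqn:ThirdLinearSystem} then reads
\begin{equation*}
K_X \alpha + P_X \beta = P_X c, \qquad P_X^\top \alpha = 0.
\end{equation*}
The candidate $(\alpha,\beta) = (0,c)$ clearly satisfies both equations, so it solves the system.

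Next I would invoke the invertibility of the block matrix stated in Theorem~\ref{thm:LimitInterpolant}, which holds precisely under the $\pi$-unisolvency hypothesis: this guarantees uniqueness, and hence $\alpha = 0$ and $\beta = c$ are the only coefficients produced by the Bayes--Sard construction when the integrand is $p$. Substituting into Equation~\ref{eqn:FlatPriorMean} gives
\begin{equation*}
s_X(p)(x) = k_X(x)\cdot 0 + p(x)\, c = \sum_{j=1}^Q c_j p_j(x) = p(x),
\end{equation*}
which is the claim. (Here I am using $p(x)$ in the two senses already present in the excerpt: the row vector of basis functions evaluated at $x$, and the function $p \in \pi$; the identification is consistent through the expansion $p = \sum_j c_j p_j$.)

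There is no real obstacle: the only substantive ingredient is the uniqueness of the solution to the saddle-point system, which is already packaged into Theorem~\ref{thm:LimitInterpolant} via the unisolvency assumption. One could alternatively phrase the proof in terms of the explicit formula for $s_X(f^\dagger)$ obtained by formally taking $\Sigma^{-1} \to 0$ in Equation~\ref{eqn:PosteriorMean}, but the block-system route above is cleaner and avoids manipulating limits of inverses.
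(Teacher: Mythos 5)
Your proof is correct and is essentially identical to the paper's: both exhibit the particular solution $(\alpha,\beta)=(0,c)$ of the saddle-point system in Eqn.~\ref{eqn:ThirdLinearSystem} and invoke the invertibility of the block matrix (guaranteed by unisolvency) to conclude uniqueness, hence $s_X(p)=p$. Your write-up merely spells out the verification of the two block equations in slightly more detail.
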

\begin{proof}
If $p \in \pi$, there exist coefficients $\beta_1',\ldots,\beta_Q'$ such that $p = \sum_{i=1}^Q \beta_j' p_j$.
That is, a particular solution of Eqn.\ \ref{eqn:ThirdLinearSystem} is $\alpha = 0$ and $\beta = \beta'$.
The linear system being invertible, this must be the only solution. We deduce that $s_X(p) = p$.
\end{proof}

In particular, if $\dim(\pi) = n$, the posterior mean reduces to the unique interpolant in $\pi$ to the data $\mathcal{D}_X$ while the posterior covariance is non-zero. 
This observation will enable us to endow any cubature rule with a non-degenerate probabilistic output in Sec.\ \ref{sec:CubatureRepro}.
Next we turn our attention to estimation of the unknown value of the integral.

\begin{figure}[t]
  \centering
  \includegraphics{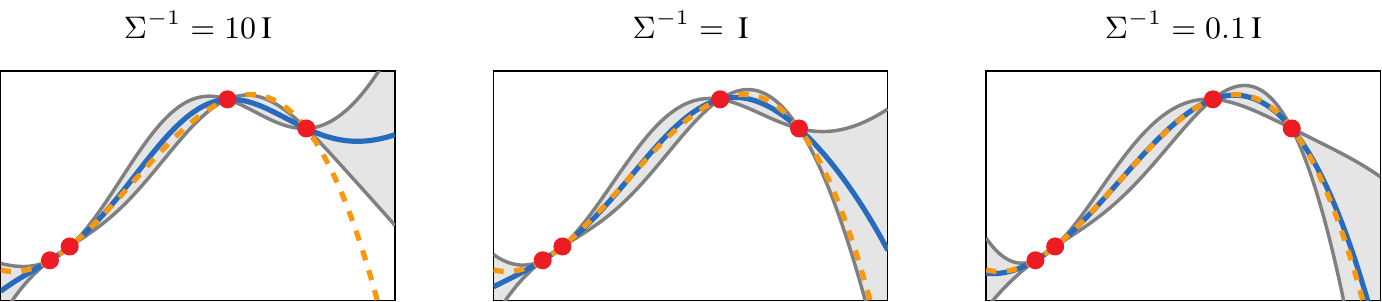}
  \caption{Posterior mean (blue) and 95\% credible intervals (gray) given four data points (red) for the prior model of Def.\@~\ref{def:Prior}, with the linear space $\pi$ taken as the set of polynomials with degree $\leq 3$. The Gaussian kernel with length-scale $\ell = 0.8$ was used. 
The unique polynomial interpolant of degree 3 to the data (dashed) is plotted for comparison.
Note convergence of the posterior mean to the polynomial interpolant as $\Sigma^{-1} \to 0$.}\label{fig:example_gp}
\end{figure}

\subsection{The Bayes--Sard Framework} \label{subsec: Bayes Sard}

Recall that the output of a generalised Bayesian cubature method is the push-forward \sloppy{${\omega \mapsto \int_D f(\cdot , \omega) \mathrm{d}\nu}$} of the stochastic process $f \mid \mathcal{D}_X$ through the integration operator $I$.
This random variable will be denoted $I(f) \mid \mathcal{D}_X$.
In this section we present the Bayes--Sard cubature method, which is based on the prior model with $\Sigma^{-1} \rightarrow 0$ studied in Sec.\ \ref{sec:GPFlatPrior}.
It will be demonstrated that Bayes--Sard cubature differs, at a fundamental level, from the standard Bayesian cubature method in that the elements of $\pi$ are exactly integrated.

Let $k_\nu(x) = I(k(\cdot,x))$ denote the \emph{kernel mean} function and $k_{\nu,\nu} = I(k_\nu)$ its integral.
Define the row vectors $p_\nu$ and $k_{\nu,X}$ to have respective entries $[p_\nu]_j = I(p_j)$ and $[k_{\nu,X}]_j = k_\nu(x_j)$.

\begin{theorem}[Bayes--Sard cubature; BSC]\label{thm:BQLimit}
Consider the Gaussian process $f \mid \mathcal{D}_X$ defined in Thm.\ \ref{thm:PosteriorMean} and suppose that $X$ is a $\pi$-unisolvent point set.
Then in the limit $\Sigma^{-1} \rightarrow 0$ we have that \sloppy{${I(f) \mid \mathcal{D}_X \sim \mathcal{N}( \mu_X(f^\dagger) , \sigma_X^2 )}$} with
\begin{equation*}
\mu_X(f^\dagger) = w_k^\top f_X^\dagger \hspace{0.5cm} \text{and} \hspace{0.5cm} \sigma_X^2 = k_{\nu,\nu} - k_{\nu,X} K_X^{-1} k_{\nu,X}^\top + \big( k_{\nu,X} K_X^{-1} P_X - p_\nu \big) w_\pi,
\end{equation*}
where the weight vectors $w_{k} \in \R^n$ and $w_{\pi} \in \R^Q$ are obtained from the solution of the linear system
\begin{equation}\label{eqn:FlatPriorWeights}
\left[ \begin{array}{cc} K_X & P_X \\ P_X^\top & 0 \end{array} \right] \left[ \begin{array}{c} w_{k} \\ w_{\pi} \end{array} \right]  =  \left[ \begin{array}{c} k_{\nu,X}^\top \\ p_\nu^\top \end{array} \right].
\end{equation}
\end{theorem}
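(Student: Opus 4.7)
The plan is to apply the continuous linear functional $I$ to the limit Gaussian process $f \mid \mathcal{D}_X$ already described in Theorem \ref{thm:LimitInterpolant}. Since $I$ is continuous on $C^0(D)$, the defining property of a Gaussian process guarantees that $I(f) \mid \mathcal{D}_X$ is automatically Gaussian, so I only need to compute its mean and variance and match them with the stated formulas via the saddle-point system \eqref{eqn:FlatPriorWeights}.

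For the mean, I would apply $I$ to the posterior mean \eqref{eqn:FlatPriorMean} to obtain $\mu_X(f^\dagger) = k_{\nu,X}\alpha + p_\nu \beta$, where $(\alpha,\beta)$ solves \eqref{eqn:ThirdLinearSystem}. Writing the common coefficient matrix of \eqref{eqn:ThirdLinearSystem} and \eqref{eqn:FlatPriorWeights} as $M$ and using its symmetry, this rearranges to
\[
\mu_X(f^\dagger) = \begin{pmatrix} k_{\nu,X} & p_\nu \end{pmatrix} M^{-1} \begin{pmatrix} f_X^\dagger \\ 0 \end{pmatrix} = \begin{pmatrix} w_k \\ w_\pi \end{pmatrix}^\top \begin{pmatrix} f_X^\dagger \\ 0 \end{pmatrix} = w_k^\top f_X^\dagger,
\]
where the middle equality uses the definition of $(w_k,w_\pi)$ in \eqref{eqn:FlatPriorWeights} together with $M^\top = M$.

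For the variance, applying $I$ in each argument of the posterior covariance \eqref{eqn:FlatPriorCov} yields
\[
\sigma_X^2 = k_{\nu,\nu} - k_{\nu,X} K_X^{-1} k_{\nu,X}^\top + (k_{\nu,X} K_X^{-1} P_X - p_\nu)[P_X^\top K_X^{-1} P_X]^{-1}(k_{\nu,X} K_X^{-1} P_X - p_\nu)^\top.
\]
To match the claimed form, I would verify the identity $w_\pi = [P_X^\top K_X^{-1} P_X]^{-1}(k_{\nu,X} K_X^{-1} P_X - p_\nu)^\top$ by block elimination in \eqref{eqn:FlatPriorWeights}: the first row gives $w_k = K_X^{-1}(k_{\nu,X}^\top - P_X w_\pi)$, and substituting into the second row $P_X^\top w_k = p_\nu^\top$ produces exactly the required expression. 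Invertibility of $P_X^\top K_X^{-1} P_X$ is ensured by $\pi$-unisolvency, which forces $P_X$ to have full column rank.

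The main obstacle is not algebraic but measure-theoretic: one has to justify the interchange of expectation with the integral operator $I$, so that the mean and covariance of $I(f)\mid\mathcal{D}_X$ really do coincide with $I(s_X(f^\dagger))$ and $(I\otimes I)(k_X(\cdot,\cdot))$. This follows from continuity of $I$ on $C^0(D)$ together with standard results on the image of a Gaussian measure under a continuous linear functional (cf.\ \cite{Bogachev1998}, cited in Section \ref{subsubsec: GP prior}), with Fubini handling the covariance integral. Once that step is secured, everything else is the block-matrix manipulation sketched above.
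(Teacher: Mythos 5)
Your block-matrix algebra is correct and reproduces the stated formulas: symmetry of the saddle-point matrix gives $\mu_X(f^\dagger)=w_k^\top f_X^\dagger$, and eliminating $w_k$ from Eqn.~\ref{eqn:FlatPriorWeights} yields $w_\pi=[P_X^\top K_X^{-1}P_X]^{-1}\big(P_X^\top K_X^{-1}k_{\nu,X}^\top-p_\nu^\top\big)$, which matches the claimed variance. The gap is in the order of operations. The theorem concerns $\lim_{\Sigma^{-1}\to 0}$ of the law of $I(f)\mid\mathcal{D}_X$, which for each finite $\Sigma$ is $\mathcal{N}\big(\mu_{X,\Sigma}(f^\dagger),\sigma_{X,\Sigma}^2\big)$ with $\mu_{X,\Sigma}(f^\dagger)=I(s_{X,\Sigma}(f^\dagger))$ and $\sigma_{X,\Sigma}^2=(I\otimes I)k_{X,\Sigma}$. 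You instead apply $I$ to the limiting mean and covariance from Thm.~\ref{thm:LimitInterpolant}. These agree only if the flat-prior limit commutes with integration against $\nu$, i.e.\ $\lim_{\Sigma^{-1}\to 0}\int_D s_{X,\Sigma}(f^\dagger)\,\mathrm{d}\nu=\int_D s_X(f^\dagger)\,\mathrm{d}\nu$ and similarly for the double integral of the covariance. Thm.~\ref{thm:LimitInterpolant} establishes only \emph{pointwise} convergence, which does not license this interchange (and $D$ may be unbounded, as in Sec.~\ref{sec:toy-example}, so there is no automatic dominated-convergence argument). This is exactly the point to which the paper's proof is devoted: it states that the convergence of $\mu_{X,\Sigma}$ and $\sigma_{X,\Sigma}^2$ cannot be deduced directly from the pointwise statement, and instead repeats the Neumann-series expansion from the proof of Thm.~\ref{thm:LimitInterpolant} on the integrated quantities themselves --- the same algebraic expressions with $k_X(x)$, $p(x)$, $k(x,x')$ replaced by $k_{\nu,X}$, $p_\nu$, $k_{\nu,\nu}$ --- where taking $\Sigma^{-1}\to 0$ is pure finite-dimensional matrix algebra.

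Your closing paragraph flags a different, more routine, interchange: that the push-forward of a Gaussian process through $I$ is Gaussian with integrated mean and doubly integrated covariance. That step is needed for each fixed $\Sigma$, but it is not the obstruction here; the obstruction is commuting the $\Sigma^{-1}\to 0$ limit with $I$. To repair your argument, either carry out the Neumann-series computation directly on $\mu_{X,\Sigma}(f^\dagger)$ and $\sigma_{X,\Sigma}^2$ as the paper does, or supply a uniform-in-$\Sigma$, $\nu$-integrable envelope for $s_{X,\Sigma}(f^\dagger)$ and $k_{X,\Sigma}$ so that dominated convergence justifies the interchange.
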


The posterior mean indeed takes the form of a cubature rule, with weights $w_{k,i}$ and points $x_i \in X$.
This provides a point estimator for the integral $I(f^\dagger)$, while the posterior variance enables uncertainty to be assessed.
The \emph{Bayes--Sard} nomenclature derives from the fact that the associated cubature rule $\mu_X$ is \emph{exact} on the space $\pi$ (recall Prop.\ \ref{thm:PolyRepro}; the proof is also similar):

\begin{proposition}
Assume that $X$ is a $\pi$-unisolvent set. Then $\mu_X(p) = I(p)$ whenever $p \in \pi$.
\end{proposition}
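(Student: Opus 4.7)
The plan is to mimic the strategy of Prop.~\ref{thm:PolyRepro}, exploiting the lower block of the linear system in Eqn.~\ref{eqn:FlatPriorWeights}. Since $X$ is $\pi$-unisolvent, Thm.~\ref{thm:LimitInterpolant} gives invertibility of the saddle-point matrix, so the weights $(w_k, w_\pi)$ are uniquely determined; in particular the lower block of Eqn.~\ref{eqn:FlatPriorWeights} reads $P_X^\top w_k = p_\nu^\top$. This identity is the only fact about the weights I will need.

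Given $p \in \pi$, write $p = \sum_{j=1}^Q \beta_j' p_j$ for some coefficient vector $\beta' \in \R^Q$. Then the vector $p_X$ of evaluations of $p$ on $X$ equals $P_X \beta'$, and by linearity of $I$ we have $I(p) = \sum_{j=1}^Q \beta_j' I(p_j) = p_\nu \beta'$. Plugging these into the cubature rule and using the lower block identity gives
\begin{equation*}
\mu_X(p) \;=\; w_k^\top p_X \;=\; w_k^\top P_X \beta' \;=\; (P_X^\top w_k)^\top \beta' \;=\; p_\nu \beta' \;=\; I(p),
\end{equation*}
which is the desired conclusion.

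An equivalent route, which I would mention as a remark rather than spell out, is to observe that the mean of the pushforward $I(f) \mid \mathcal{D}_X$ equals $I$ applied to the posterior mean of $f \mid \mathcal{D}_X$, i.e. $\mu_X(f^\dagger) = I(s_X(f^\dagger))$, so that $\mu_X(p) = I(s_X(p)) = I(p)$ follows immediately from Prop.~\ref{thm:PolyRepro}. I do not foresee any real obstacle: the unisolvency hypothesis has already been absorbed into the invertibility asserted in Thm.~\ref{thm:LimitInterpolant} and Thm.~\ref{thm:BQLimit}, and the remainder is a one-line manipulation of the saddle-point system.
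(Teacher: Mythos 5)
Your proof is correct and is essentially the argument the paper intends: the paper only remarks that the proof is ``similar'' to that of Prop.~\ref{thm:PolyRepro}, and your use of the lower block $P_X^\top w_k = p_\nu^\top$ of the saddle-point system in Eqn.~\ref{eqn:FlatPriorWeights}, combined with linearity, is precisely that similar argument (your alternative route via $\mu_X(p) = I(s_X(p))$ and Prop.~\ref{thm:PolyRepro} is also valid). No gaps.
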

Thus we have a probabilistic framework that combines the flexibility of Bayesian cubature with the robustness of classical numerical integration techniques, for instance based on a polynomial exactness criteria being satisfied.

\subsection{Normalised Bayesian Cubature}

The difference between BSC and BC is perhaps best illustrated in the case $\pi = \{1\}$, where constant functions are exactly integrated in BSC but not in BC.
Indeed, $P_X = \mathbbm{1}$, the $n$-vector of ones, and
\begin{equation*}
w_k = \bigg( \text{I} - \frac{K_X^{-1} \mathbbm{1} \mathbbm{1}^\top }{\mathbbm{1}^\top K_X^{-1} \mathbbm{1}} \bigg) K_X^{-1} k_{\nu,X}^\top + \frac{K_X^{-1} \mathbbm{1} }{\mathbbm{1}^\top K_X^{-1} \mathbbm{1}}.
\end{equation*}
These weights have the desirable property of summing up to one; we might therefore call this a \emph{normalised Bayesian cubature} method.
Furthermore, if the kernel is parametrised by a length-scale parameter and this parameter is too small, then $w_{k,i} \approx 1/n$, which is a reasonable default.
This should be contrasted with BC, for which the weights $w_{k,i} \approx 0$ become degenerate instead.

\subsection{Reproduction of Classical Cubature Rules}\label{sec:CubatureRepro}

In this section we indicate how \emph{any} cubature rule can be endowed with a probabilistic interpretation under the Bayes--Sard framework.
Recall that every continuous positive definite kernel $k$ induces a unique \emph{reproducing kernel Hilbert space} (RKHS) $H(k) \subset C^0(D)$ with norm denoted $\norm{\cdot}_k$~\cite{Berlinet2011}.
It is well-known that the weights $w_\text{BC} := K_X^{-1} k_{\nu,X}^\top \in \R^n$ of the standard Bayesian cubature method (recall Rmk. \ref{rem: BC}) are \emph{worst-case optimal} in $H(k)$:
\begin{equation*}
w_\text{BC} = \argmin_{w \in \R^n} e_k(X,w), \qquad e_k(X,w) := \sup_{ \norm{h}_k \leq 1 } \, \abs[\bigg]{\int_D h \mathrm{d} \nu - \sum_{i=1}^n w_i h(x_i) },
\end{equation*}
where $e_k(X,w)$ is the \emph{worst-case error} (WCE) of the cubature rule specified by the points $X$ and weights $w$.
Furthermore, the posterior standard deviation coincides with $e_k(X,w_\text{BC})$.
See~\cite{Larkin1970,RichterDyn1971b,Oettershagen2017} for further details on optimal cubature rules in RKHS.
It is now shown that, when $\dim(\pi) = n$, the Bayes--Sard weights in Thm.\ \ref{thm:BQLimit} do not depend on the kernel and the standard deviation coincides with the WCE:

\begin{theorem}\label{thm:PolyReproBC}
Suppose that $\dim(\pi) = n$ and let $X$ be a $\pi$-unisolvent set. Then
\begin{equation*}
\mu_X(f^\dagger) = w_k^\top f_X^\dagger, \hspace{0.5cm}w_k^\top = p_\nu P_X^{-1}, \hspace{0.5cm} \mu_X(p) = I(p) \hspace{0.5cm} \text{for every $p \in \pi$} \hspace{0.5cm} 
\end{equation*}
and
\begin{equation*}
\sigma_X^2 = e_k(X,w_k)^2 = k_{\nu,\nu} - 2 k_{X,\nu} w_k + w_k^\top K_X w_k.
\end{equation*}
That is, the Bayes--Sard cubature weights $w_k$ are the unique weights such that every function in $\pi$ is integrated exactly and the posterior standard deviation $\sigma_X$ coincides with the WCE in the RKHS $H(k)$.
\end{theorem}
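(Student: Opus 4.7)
The plan is to exploit the fact that under the hypothesis $\dim(\pi)=n$ the Vandermonde matrix $P_X$ becomes square, and $\pi$-unisolvency forces $P_X$ to be invertible (its kernel, viewed as vectors of coefficients, corresponds exactly to elements of $\pi$ vanishing on $X$). This single observation trivialises the block linear system in Eqn.~\ref{eqn:FlatPriorWeights}.

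First I would read off the weights directly from that system. The lower block row gives $P_X^{\!\top} w_k = p_\nu^{\!\top}$, which immediately yields $w_k^{\!\top} = p_\nu P_X^{-1}$, manifestly independent of the kernel. The upper block row then gives $w_\pi = P_X^{-1}(k_{\nu,X}^{\!\top} - K_X w_k)$, which I will need momentarily for the variance. The exactness claim $\mu_X(p)=I(p)$ for $p\in\pi$ is immediate from the proposition stated just before the theorem, but can also be verified directly: writing $p=\sum_j \beta_j' p_j$ gives $f_X^\dagger = P_X\beta'$ and $I(p)=p_\nu\beta'$, so $w_k^{\!\top} f_X^\dagger = p_\nu P_X^{-1} P_X \beta' = p_\nu \beta' = I(p)$.

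Next I would tackle the variance. Starting from the expression for $\sigma_X^2$ in Thm.~\ref{thm:BQLimit} and substituting the formula for $w_\pi$ just obtained, the cross term becomes
\begin{equation*}
\big(k_{\nu,X} K_X^{-1} - p_\nu P_X^{-1}\big)\big(k_{\nu,X}^{\!\top} - K_X w_k\big).
\end{equation*}
Expanding the four products and using $p_\nu P_X^{-1} = w_k^{\!\top}$, the term $k_{\nu,X}K_X^{-1}k_{\nu,X}^{\!\top}$ cancels with the corresponding term already present in $\sigma_X^2$, and the remaining pieces collapse to $-2 w_k^{\!\top} k_{\nu,X}^{\!\top} + w_k^{\!\top} K_X w_k$. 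Adding the $k_{\nu,\nu}$ term gives the claimed quadratic expression.

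Finally, I would identify this with the WCE. This is the standard RKHS computation: for any weights $w$,
\begin{equation*}
e_k(X,w)^2 = \norm[\big]{I(\cdot) - \textstyle\sum_i w_i \delta_{x_i}}_{H(k)^*}^2 = k_{\nu,\nu} - 2 w^{\!\top} k_{\nu,X}^{\!\top} + w^{\!\top} K_X w,
\end{equation*}
using the reproducing property $\langle h, k(\cdot,x)\rangle_k = h(x)$ and the representation of $I$ via the kernel mean $k_\nu$. Specialising to $w=w_k$ matches the expression obtained above. The main obstacle is purely bookkeeping in the variance simplification; nothing deeper is required, since the invertibility of $P_X$ makes the system collapse in a very clean way.
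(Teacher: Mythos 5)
Your proposal is correct and follows essentially the same route as the paper's proof: both hinge on the observation that $\dim(\pi)=n$ plus unisolvency makes $P_X$ an invertible square matrix, then extract $w_k$ and $w_\pi$ from the block system of Thm.~\ref{thm:BQLimit}, substitute into the variance expression, and identify the result with the explicit WCE formula. Your reading of $w_k^{\!\top}=p_\nu P_X^{-1}$ directly from the lower block row is a slightly more economical path than the paper's explicit Schur-complement solution, but the argument is otherwise identical.
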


\begin{corollary} \label{cor:cubatureRepro}
Consider an $n$-point cubature rule with points $X$ and non-zero weights $w \in \R^n$ and assume that $\nu$ admits a positive density function.
Then there is a function space $\pi$ of dimension $n$, such that the Bayes--Sard method recovers $w_k = w$ and $\sigma_X^2 = e_k(X, w)^2$, as defined in Thm.\ \ref{thm:BQLimit}.
\end{corollary}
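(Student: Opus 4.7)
The plan is to reduce the corollary to Thm.\ \ref{thm:PolyReproBC} and then exhibit an explicit basis. Under the matched-dimension hypothesis $\dim(\pi) = n$, that theorem shows the Bayes--Sard weights depend only on $\pi$ and $X$, via $w_k^\top = p_\nu P_X^{-1}$, and furthermore that $\sigma_X^2 = e_k(X, w_k)^2$ holds automatically for \emph{any} kernel $k$. Hence it suffices to produce a basis $\{p_1, \ldots, p_n\} \subset C^0(D)$ such that $P_X$ is invertible and $p_\nu P_X^{-1} = w^\top$; the variance identity is then a free consequence.

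The cleanest route is to force $P_X = I$ by choosing a Lagrange-type basis at $X$, which reduces the requirement to $I(p_j) = w_j$ for every $j$. First pick any continuous $q_j \in C^0(D)$ with $q_j(x_i) = \delta_{ij}$; for instance, the product-of-distances formula $q_j(x) = \prod_{i \ne j} \norm{x - x_i}^2 / \prod_{i \ne j} \norm{x_j - x_i}^2$ works whenever $D \subseteq \R^d$, though any continuous interpolant will do. These $q_j$'s will generally have the wrong integral against $\nu$, so we apply a rank-one correction. Let $b_0(x) := \prod_{i=1}^n \norm{x - x_i}^2$, which is continuous, nonnegative, vanishes exactly on $X$, and is strictly positive on $D \setminus X$; the positive-density hypothesis on $\nu$ yields $I(b_0) > 0$, so set $b := b_0 / I(b_0)$ so that $b(x_i) = 0$ and $I(b) = 1$. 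Finally define $p_j := q_j + (w_j - I(q_j)) \, b$ for each $j$ and let $\pi := \mathrm{span}\{p_1, \ldots, p_n\}$.

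By construction $p_j(x_i) = \delta_{ij} + (w_j - I(q_j)) \cdot 0 = \delta_{ij}$, so $P_X = I$ is invertible — in particular $\dim(\pi) = n$ and $X$ is $\pi$-unisolvent — while $I(p_j) = I(q_j) + (w_j - I(q_j)) \cdot 1 = w_j$, so $p_\nu = w^\top$. Applying Thm.\ \ref{thm:PolyReproBC} now gives $w_k^\top = p_\nu P_X^{-1} = w^\top$ and $\sigma_X^2 = e_k(X, w)^2$, as required. The only delicate point — and the sole reason for the positive-density hypothesis — is the existence of a continuous function that vanishes on $X$ yet has nonzero integral against $\nu$: without it there is no ``slack'' in which to tune the integrals independently of the interpolation constraints. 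The non-zero weights assumption plays no role in this construction and could in principle be dropped; it appears to be stated merely to emphasise that the recovered rule is genuinely $n$-point.
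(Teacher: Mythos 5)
Your proof is correct and reaches the conclusion by the same reduction as the paper — exhibit a basis with $P_X$ invertible and $p_\nu P_X^{-1} = w^\top$, then invoke Thm.~\ref{thm:PolyReproBC} — but the construction itself is genuinely different. The paper takes disjoint sets $D_i$ of positive measure with $x_i \in D_i$ and sets $p_i = \mathbbm{1}_{D_i \setminus \{x_i\}} + \frac{\nu(D_i)}{w_i}\mathbbm{1}_{\{x_i\}}$, using the positive density only to guarantee $\nu(\{x_i\}) = 0$ and $\nu(D_i) > 0$; this makes $P_X$ diagonal with entries $\nu(D_i)/w_i$, which is why the non-zero weight assumption is genuinely needed there (it is a division by $w_i$, not mere emphasis). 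Your route — continuous cardinal functions $q_j$ plus a rank-one correction along a bump $b$ vanishing on $X$ with $I(b)=1$ — buys two things the paper's does not: the basis lies in $C^0(D)$, which is more consistent with the framework of Sec.~\ref{subsubsec: GP prior} where the GP mean is required to be continuous (the paper's indicator basis is discontinuous), and it degrades gracefully when some $w_j = 0$. What the paper's version buys is that indicator functions are bounded, so integrability against $\nu$ is automatic; your $q_j$ and $b_0$ are polynomials of growing degree, so on an unbounded $D$ with a heavy-tailed $\nu$ the quantities $I(q_j)$ and $I(b_0)$ could fail to be finite. This is easily repaired (replace each factor $\norm{x-x_i}^2$ by, say, $1 - \mathrm{e}^{-\norm{x-x_i}^2}$, which is bounded, continuous, and vanishes exactly at $x_i$), but as written the step ``$I(b_0) > 0$'' silently presumes $I(b_0) < \infty$, and you should say so.
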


Thus \emph{any} cubature rule can be recovered as a posterior mean for some prior.
Our result goes beyond earlier work in \cite{Sarkka2016,Karvonen2017a}, in the sense that the associated posterior is non-degenerate (i.e.\ has non-zero variance) in the Bayes--Sard framework.
Further discussion is provided in Sec. \ref{sec:kernel-perspective} of the supplement.
From a practical perspective, this enables us to simultaneously achieve the same reliable integration performance as popular non-probabilistic rules (see Sec.\ \ref{subsec:Q=n-rates} in the supplement) \emph{and} to perform formal uncertainty quantification for the integral.

\subsection{Convergence Results}\label{sec:theory}

This section contains fundamental convergence results for the cubature rule $\mu_X$ associated with the mean of the Bayes--Sard output.
For standard BC, the analogous convergence results can be found in \cite{Briol2017,Kanagawa2017}.
Our attention is restricted to the case when $\pi$ is the space $\Pi_{m}(\R^d)$ of $d$-variate polynomials of degree at most $m \geq 0$:
\begin{equation*}
\Pi_m(\R^d) \coloneqq \mathrm{span} \{ x^\alpha \colon \abs{\alpha} \leq m \}
\end{equation*}
where $\alpha \in \N_0^d$ is a multi-index, $x^\alpha = x_1^{\alpha_1} \cdots x_d^{\alpha_d}$ and $\abs{\alpha} = \alpha_1 + \cdots + \alpha_d$.
It is noteworthy that Thm.\ \ref{thm:maternConvergence} has been derived, essentially in the form we present it, in non-probabilistic setting already in~\cite{Bezhaev1991}.
However, we go beyond \cite{Bezhaev1991} and provide convergence results for both the Gaussian kernel, as well as kernels of the Mat\'{e}rn class that are used in the numerical experiments in Sec.\ \ref{sec:numerics}.

To establish convergence, we observe that the posterior mean $s_X(f^\dagger)$ defined in Eqn.\ \ref{eqn:FlatPriorMean} coincides with the interpolant defined in \cite[Section 8.5]{Wendland2005} for a conditionally positive definite kernel\footnote{Note that a positive definite kernel is also a conditionally positive definite kernel.}.
The extensive convergence theory outlined in \cite[Chapter 11]{Wendland2005} can be therefore brought to bear.
For a set $X \subset D$ and $D$ bounded, define the \emph{fill distance} $h_{X,D} \coloneqq \sup_{x \in D} \, \min_{i = 1,\dots,n} \, \|x - x_i\|$.
Considered as a sequence of sets indexed by $n \in \mathbb{N}$, we say $X$ is \emph{quasi-uniform} in $D$ if $h_{X,D} \lesssim n^{-1/d}$, where $a_n \lesssim b_n$ is used to signify that the ratio $a_n / b_n$ is bounded above for sufficiently large $n \in \mathbb{N}$.

\begin{theorem}[Spectral convergence for Gaussian kernels] \label{thm:conv-Gaussian}
Let $D$ be a hypercube in $\R^d$, let $\nu$ admit a density which is bounded, let $X$ be a $\Pi_{m}(\R^d)$-unisolvent set for some $m \geq 0$, and let $k$ be a Gaussian kernel: $k(x,x') = \exp(-\norm{x-x'}^2/(2\ell^2))$ for some $\ell > 0$.
Then there is a $c > 0$ such that, for a quasi-uniform point set and any $\varepsilon > 0$,
\begin{equation*}
\abs{ \mu_X(f^\dagger) - I(f^\dagger) } \lesssim \mathrm{e}^{-c n^{1/d - \varepsilon}} \|f^\dagger\|_k.
\end{equation*}
\end{theorem}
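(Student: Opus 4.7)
The plan is to reduce the cubature error to a pointwise interpolation error and then to invoke the standard spectral convergence theory for Gaussian-kernel interpolation collected in \cite{Wendland2005}. First, I would observe that the Bayes--Sard weights $[w_k^\top, w_\pi^\top]^\top$ and the interpolation coefficients $[\alpha^\top, \beta^\top]^\top$ are defined by the same symmetric saddle-point matrix, appearing in Eqns.~\ref{eqn:ThirdLinearSystem} and \ref{eqn:FlatPriorWeights}. A short transpose-symmetry argument using that this matrix is symmetric gives the identity $\mu_X(f^\dagger) = I(s_X(f^\dagger))$: applying the Bayes--Sard rule amounts to integrating the posterior mean exactly. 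Since $D$ is a bounded hypercube and $\nu$ admits a bounded density on $D$, this identity yields
\begin{equation*}
\abs{I(f^\dagger) - \mu_X(f^\dagger)} = \abs{I(f^\dagger - s_X(f^\dagger))} \lesssim \norm{f^\dagger - s_X(f^\dagger)}_{L^\infty(D)}.
\end{equation*}

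Next I would apply a Gaussian-kernel interpolation estimate. As observed immediately before the theorem, $s_X(f^\dagger)$ coincides with the conditionally positive definite kernel interpolant of \cite[Sec.\ 8.5]{Wendland2005}; the Gaussian kernel is positive definite and hence also conditionally positive definite of every order $m$. The power-function bounds of \cite[Ch.\ 11]{Wendland2005}---in particular the spectral convergence estimate specific to the Gaussian native space---then give, for $f^\dagger \in H(k)$ and $h_{X,D}$ sufficiently small,
\begin{equation*}
\norm{f^\dagger - s_X(f^\dagger)}_{L^\infty(D)} \leq C \exp\bigl(-\tilde{c}\, \abs{\log h_{X,D}}/h_{X,D}\bigr) \norm{f^\dagger}_k,
\end{equation*}
for constants $C, \tilde{c} > 0$ independent of $X$. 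Quasi-uniformity gives $h_{X,D} \lesssim n^{-1/d}$ and hence $\abs{\log h_{X,D}}/h_{X,D} \gtrsim (\log n)\, n^{1/d}/d$, so the right-hand side is at most $n^{-(\tilde c/d)\, n^{1/d}}$ up to a multiplicative constant. For any fixed $\varepsilon > 0$ the logarithmic factor eventually dominates $n^{-\varepsilon}$, so this is absorbed into $\exp(-c\, n^{1/d - \varepsilon})$ for an appropriate $c > 0$, yielding the claim.

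The main obstacle I anticipate is not the worst-case bookkeeping but checking that Wendland's spectral convergence estimate, which is most often quoted for kernel interpolation without a polynomial drift, transfers cleanly to the Bayes--Sard interpolant, where a polynomial reproduction space of arbitrary degree $m$ is adjoined to a strictly positive definite kernel. This is resolved by appealing directly to the power-function formulation in \cite[Sec.\ 11.1]{Wendland2005}, which is phrased in the conditionally positive definite framework and is valid for arbitrary polynomial precision; the analytic decay governing the Gaussian native space is uniform in the polynomial order, so the same rate persists. All implicit constants (depending on $\ell$, $m$, $d$, $D$, the density of $\nu$, and the quasi-uniformity constant) are absorbed into the $\lesssim$ notation at the final step.
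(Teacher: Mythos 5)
Your proposal is correct and follows essentially the same route as the paper: the paper's proof simply cites \cite[Thm.\ 11.22]{Wendland2005} for the bound $\abs{\mu_X(f^\dagger) - I(f^\dagger)} \lesssim \mathrm{e}^{-c/h_{X,D}^{1-\varepsilon}}\norm{f^\dagger}_k$ and converts $h_{X,D}$ to $n$ via quasi-uniformity, exactly as you do. You merely make explicit the steps the paper leaves implicit --- the identity $\mu_X(f^\dagger) = I(s_X(f^\dagger))$ from the symmetry of the saddle-point system, the $L^\infty$ reduction, and the applicability of the conditionally positive definite framework --- all of which are sound.
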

\begin{proof}
That for any $\varepsilon > 0$ there is a $h_0 > 0$ such that $\abs{ \mu_X(f^\dagger) - I(f^\dagger) } \lesssim \mathrm{e}^{-c/h_{X,D}^{1-\varepsilon}} \|f^\dagger\|_k$ whenever $h_{X,D} < h_0$ is established by \cite[Thm.\ 11.22]{Wendland2005}.
The remainder of the proof is transparent.
\end{proof}

The next result extends~\cite[Prop.\ 4]{Kanagawa2017} for the standard Bayesian cubature method. 
Its proof follows that of Thm.~\ref{thm:conv-Gaussian} and is an application of~\cite[Cor.\ 11.33]{Wendland2005}.

\begin{theorem}[Polynomial convergence for Sobolev kernels]\label{thm:maternConvergence}
Let $X$ be a $\Pi_{m}(\R^d)$-unisolvent set for some $m \geq 0$. Suppose that (i) $D$ is a bounded open set that satisfies an interior cone condition and whose boundary is Lipschitz; (ii) for $\alpha > d/2$, the RKHS of the kernel $k$ is norm-equivalent to the standard Sobolev space $H^\alpha(D)$ and (iii) $\nu$ admits a density function that is bounded.
Then, for a quasi-uniform point set,
\begin{equation*}
\abs{ \mu_X(f^\dagger) - I(f^\dagger) } \lesssim n^{-\alpha/d} \|f^\dagger\|_{H^\alpha(D)}.
\end{equation*}
\end{theorem}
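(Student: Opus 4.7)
The plan is to reduce the integration error to an $L^2(D)$ interpolation error and then to invoke the standard Sobolev-space convergence theory for polynomial-augmented kernel interpolation.

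First, since $I$ is a continuous linear functional on $C^0(D)$ and $f \mid \mathcal{D}_X$ is Gaussian, one has $\mu_X(f^\dagger) = I(s_X(f^\dagger))$, with $s_X(f^\dagger)$ the flat-prior posterior mean from Eqn.\ \ref{eqn:FlatPriorMean}. Writing $\nu = \rho \, \mathrm{d}x$ with $\rho$ bounded and applying the Cauchy--Schwarz inequality on the bounded domain $D$, I would bound
\begin{equation*}
\abs{ \mu_X(f^\dagger) - I(f^\dagger) } = \abs[\bigg]{ \int_D \big( s_X(f^\dagger) - f^\dagger \big) \rho \, \mathrm{d}x } \leq \norm{\rho}_\infty \abs{D}^{1/2} \, \norm{ s_X(f^\dagger) - f^\dagger }_{L^2(D)} .
\end{equation*}

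Next I would identify $s_X(f^\dagger)$ with the polynomial-augmented kernel interpolant of~\cite[Sec.\ 8.5]{Wendland2005}: the kernel $k$ is positive definite and $X$ is $\Pi_m(\R^d)$-unisolvent, so the linear system in Eqn.\ \ref{eqn:ThirdLinearSystem} is uniquely solvable and the two constructions agree term-by-term. The structural hypotheses of the theorem --- boundedness, Lipschitz boundary and interior cone condition on $D$; the RKHS norm-equivalent to $H^\alpha(D)$ with $\alpha > d/2$; unisolvency and quasi-uniformity of $X$ --- are precisely those required by \cite[Cor.\ 11.33]{Wendland2005}, which delivers
\begin{equation*}
\norm{ s_X(f^\dagger) - f^\dagger }_{L^2(D)} \lesssim h_{X,D}^{\alpha} \, \norm{f^\dagger}_{H^\alpha(D)}
\end{equation*}
for $h_{X,D}$ below some problem-dependent threshold. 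Substituting $h_{X,D} \lesssim n^{-1/d}$ from quasi-uniformity yields the claimed $n^{-\alpha/d}$ rate.

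The genuinely delicate analytic step --- the ``doubling'' from the pointwise interpolation rate $h^{\alpha-d/2}$ of Wendland's Thm.\ 11.22 (as used in the proof of Thm.\ \ref{thm:conv-Gaussian}) to the $L^2$ rate $h^{\alpha}$ employed here --- is buried inside Cor.\ 11.33 and, following the authors' cue, is imported wholesale rather than re-derived. Beyond that, the only real obstacle is bookkeeping: confirming the identification between the GP-based interpolant and Wendland's analytic construction, and verifying that any choice of $m \geq 0$ is compatible both with the unisolvency hypothesis in the statement and with Wendland's polynomial-reproduction requirement. Both are immediate, since for a positive definite kernel the augmentation order is unconstrained.
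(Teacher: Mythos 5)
Your proposal is correct and follows essentially the same route as the paper, which simply states that the result follows the pattern of Thm.\ \ref{thm:conv-Gaussian} via \cite[Cor.\ 11.33]{Wendland2005}: identify $s_X(f^\dagger)$ with the polynomial-augmented interpolant of \cite[Sec.\ 8.5]{Wendland2005}, bound the integration error by an $L^2$ (or $L^1$) interpolation error using the bounded density, and substitute $h_{X,D} \lesssim n^{-1/d}$. Your explicit use of the $L^2$ bound rather than the sup-norm bound is exactly what is needed to recover the full rate $n^{-\alpha/d}$ rather than $n^{-\alpha/d+1/2}$, and is the intended reading of the paper's one-line proof.
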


\begin{remark}
The classical Mat\'{e}rn kernel with smoothness parameter $\rho$ satisfies Assumption (ii) of the above theorem with $\alpha = \rho + d/2$.
Sec. \ref{subsec:maternDetails} of the supplementary material elaborates further on Assumptions (i) and (ii).
\end{remark}

This completes our theoretical convergence analysis for the Bayes--Sard method.

\section{Experimental Results} \label{sec:numerics}

This section contains two numerical experiments, which investigate the empirical performance of the Bayes--Sard cubature method and the associated uncertainty quantification that is provided.
The examples demonstrate that Bayes--Sard cubature is typically at least as accurate as standard Bayesian cubature whilst being less sensitive to misspecification of the kernel length-scale parameter.

\subsection{On Choosing the Kernel Parameters} \label{sec:kernel-parameters}

The stationary kernels typically used in Gaussian process regression are parametrised by positive \emph{length-scale}\footnote{In general, a distinct length scale parameter for each dimension could be used.} and \emph{amplitude} parameters $\ell$ and $\lambda$, respectively:
\begin{equation*}
k(x,x') = k(x-x') = \lambda k_0 \big( (x-x')/\ell \big)
\end{equation*}
for, in a slight abuse of notation, some base kernel $k_0$.
Adapting these parameters in a data-dependent way is an essential prerequisite for meaningful quantification of uncertainty for the integral.
In the first example we set these parameters independently, following the approach suggested in~\mbox{\cite[Sec.\ 4.1]{Briol2017}}.
We (i) assign $\lambda$ an improper prior and marginalise over it so that the BSC posterior becomes Student-$t$ with the mean $\mu_X(f^\dagger)$, variance $((f_X^\dagger)^\top K_{X}^{-1} f_X^\dagger/n)\,\sigma_X^2$ and $n$ degrees of freedom, and (ii) set $\ell$ using empirical Bayes (EB) based on the standard GP log-marginal likelihood~\cite[Sec.\ 5.4.1]{Rasmussen2006}.
There are of course other possibilities that could be explored, such as cross-validation or using the likelihood of the regression model set up in Sec.\ \ref{subsec: GP section} (see \cite[Eqn.\ 2.45]{Rasmussen2006}).

\begin{figure}[t]
  \centering
  \includegraphics{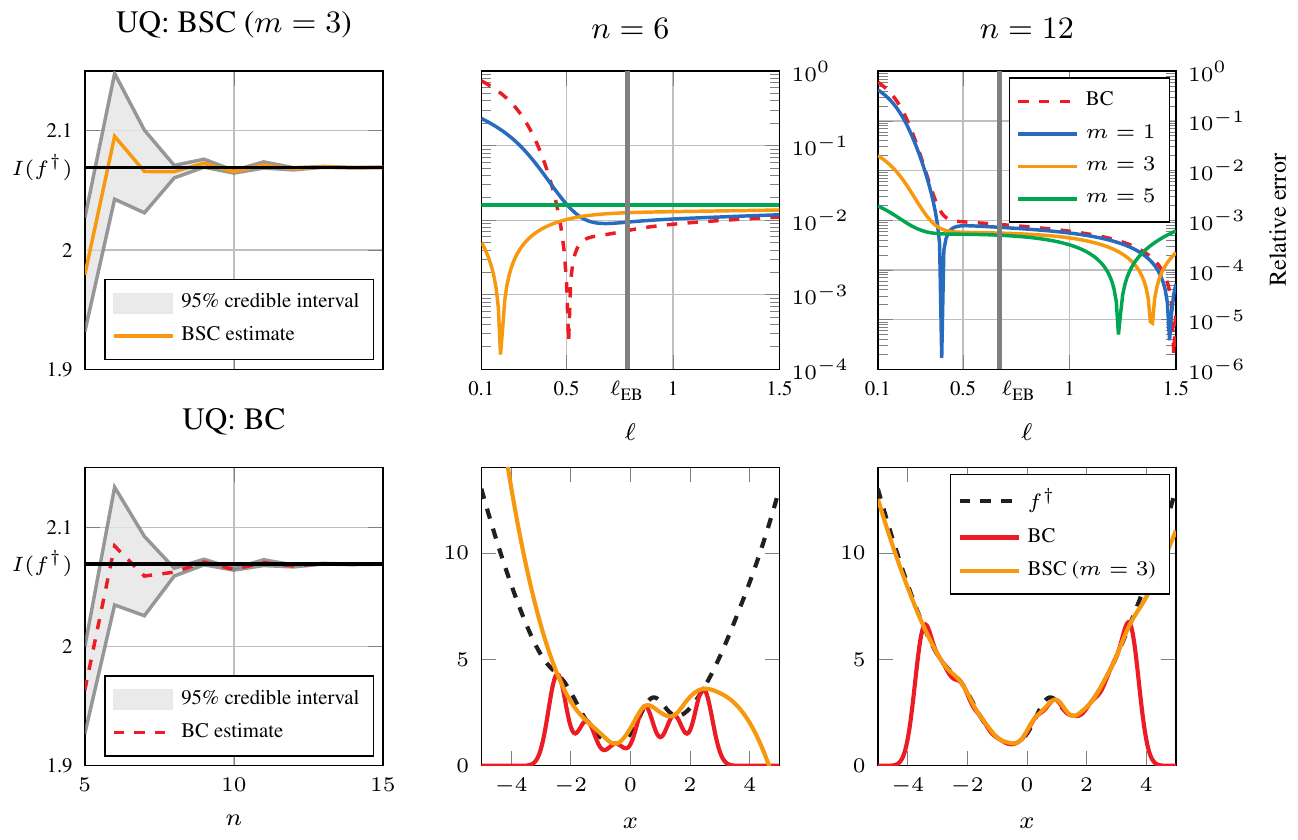}
  \caption{
  Approximation of the integral in Eqn.\ \ref{eqn:ToyIntegral} using Bayesian cubature (BC) and Bayes--Sard cubature (BSC) with $\pi = \Pi_m(\R)$ for $m=1,3,5$, both based on the Gaussian kernel.
  The $n$ nodes were placed uniformly on $[-\sqrt{n},\sqrt{n} \, ]$.
  \emph{Left}: Uncertainty quantification (UQ) provided by BSC with $m = 3$ and BC when kernel parameters were selected as outlined in Sec.\ \ref{sec:kernel-parameters}.
  \emph{Middle} \& \emph{right}: Effect of the length-scale on approximation accuracy.
  The upper row presents the relative integration error $\abs{I(f^\dagger) - I_n(f^\dagger)}/I(f^\dagger)$ for each cubature rule $I_n$ as a function of the length-scale.
  The  ``optimal'' length-scales $\ell_\text{EB}$, as computed by empirical Bayes, are also displayed.
  The lower row contains the corresponding posterior means when an inappropriate value, $\ell = 0.3$, is used. 
  Since $\dim(\Pi_5(\R)) = 6$, that BSC for $m=5$ and $n=6$ is independent of $\ell$ is a consequence of Thm.\ \ref{thm:PolyReproBC}.
  }\label{fig:1D-quad}
\end{figure}

\subsection{A One-Dimensional Toy Example}\label{sec:toy-example}

Our first example is one-dimensional.
The test function and its integral that we considered were
\begin{equation}\label{eqn:ToyIntegral}
f^\dagger(x) = \exp\left( \sin(2x) - \frac{x^2}{5} \right) + \frac{x^2}{2} \hspace{0.5cm} \text{and} \hspace{0.5cm} I(f^\dagger) = \frac{1}{\sqrt{2\pi}} \int_\R f^\dagger(x) \mathrm{e}^{-x^2/2} \mathrm{d} x \approx 2.0693.
\end{equation}
The effect of the length-scale $\ell$ of the Gaussian kernel on the performance of standard Bayesian cubature and Bayes--Sard cubature of Sec.\ \ref{subsec: Bayes Sard}, with $\pi = \Pi_m(\R)$ for different $m$, was investigated and the quality of the uncertainty quantification was assessed.

Results are depicted in Fig.\ \ref{fig:1D-quad}.
It can be observed that the BSC is more robust compared to Bayesian cubature when the length-scale is misspecified (in particular, when it is too small).
This is because the polynomial part mitigates the tendency of the posterior mean to revert quickly back to zero.
For reasonable values of the length-scale, the accuracy of the different methods is comparable.
The BSC and BC provide qualitatively similar quantification of uncertainty and both exhibit a degree of over-confidence, as observed already in~\cite[Sec.\ 5.1]{Briol2017} for the BC and attributed to the manner in which the length scale is selected.
However, BSC is less over-confident.
The reason for this is that the BSC variance in Thm.\ \ref{thm:BQLimit} is a sum of the BC variance and a positive term.

\subsection{Zero Coupon Bonds}\label{sec:zcb}

This section experiments with a high-dimensional zero coupon bond example that has been used previously in numerical experiments for kernel cubature in~\cite[Sec.\ 5.5]{Karvonen2018}.
See~\cite[Sec.\ 6.1]{Holtz2011} and Sec.\@~\ref{sec:experiments} of the supplement for a more detailed account of this experiment.
The integral of interest is
\begin{equation}\label{eq:zcbInt}
P(0,T) \coloneqq \mathbb{E}\Bigg[ \exp\bigg( -\Delta t \sum_{i=0}^{D-1} r_{t_i} \bigg)\Bigg] = \exp(-\Delta t r_{t_0}) \mathbb{E}\Bigg[ \exp\bigg( -\Delta t \sum_{i=1}^{D-1} r_{t_i} \bigg)\Bigg],
\end{equation}
where $r_{t_i}$ are Gaussian random variables. This $d = D-1$ dimensional integral represents the price at time $t = 0$ of a zero coupon bond with maturity time $T$ and arises from $D$-step uniform Euler--Maruyama discretisation of the Vasicek model. Existence of a closed-form solution for $P(0,T)$ makes numerical approximation of Eqn.\ \ref{eq:zcbInt} an attractive high-dimensional benchmark problem. 

We transformed the integral~\eqref{eq:zcbInt} onto the hypercube $[0,1]^d$ and compared the accuracy of BC to BSC with $\pi = \Pi_1(\R^d)$.
Different dimensions $d$ and length-scales $\ell$ were considered and the product Matérn kernel with smoothness parameter $\rho = 5/2$ was used.
As in Sec.\ \ref{sec:toy-example}, it is apparent from Fig.\@~\ref{fig:zcb} that the BSC is less sensitive to misspecification of the length-scale parameter compared to the standard BC method.
In this misspecified case a two order of magnitude reduction in integration error was observed.
This is attributed to the improved extrapolation performance conferred through the polynomial component.

\begin{figure}[t]
  \centering
  \includegraphics[width=\textwidth]{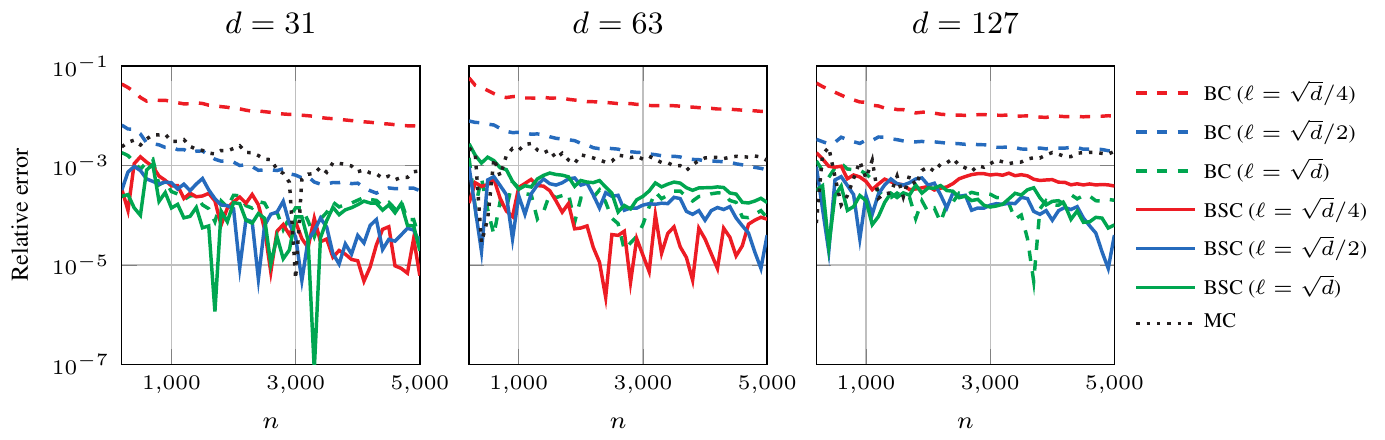}
  \caption{
  Approximation of the $d$-dimensional integral~\eqref{eq:zcbInt} using Bayesian cubature (BC) and Bayes--Sard cubature (BSC) with $\pi = \Pi_1(\R^d)$, both based on product Matérn kernel with $\rho = 5/2$ and length-scale $\ell$. 
  Figures contain relative integration errors for each cubature rule for a given dimension and different length-scales as a function of the number of nodes $n$, drawn randomly from the uniform distribution. 
  Probability of the point set being not unisolvent is zero as this would require the points to lie on a hyperplane.
  The standard Monte Carlo approximation (MC) is plotted for comparison.}\label{fig:zcb}
\end{figure}

\section{Conclusion}\label{sec:discussion}

This paper proposed a Bayes--Sard cubature method, which provides an explicit connection between classical cubatures and the Bayesian inferential framework.
In particular, we obtained polynomially exact generalisations of standard Bayesian cubature in Thm.\ \ref{thm:BQLimit} and demonstrated in Cor.\@~\ref{cor:cubatureRepro} how any cubature rule, including widely-used cubature methods, can be recovered as the output of a probabilistic model.

The main practical drawback of standard Bayesian cubature is its acute sensitivity to the choice of kernel. 
As demonstrated in Sec.\ \ref{sec:numerics}, the Bayes--Sard point estimator performance is more robust to the choice of kernel and this suggests that fast GP methods (e.g.,\@~\cite{Hensman2018,ZhouEtAl2018}) could be used for efficient automatic selection of kernel parameters with little adverse effect on accuracy of the point estimator.
On the other hand, further work is required to assess the quality of the uncertainty quantification provided by the Bayes--Sard method.
This will require careful analysis that accounts for how kernel parameters are estimated, and is expected to be technically more challenging (see, e.g.,~\cite{Xu2017}).

\subsubsection*{Acknowledgments}

The authors are grateful for discussion with Aretha Teckentrup, Catherine Powell and Fred Hickernell.
TK was supported by the Aalto ELEC Doctoral School.
CJO was supported by the Lloyd's Register Foundation programme on data-centric engineering. 
SS was supported by the Academy of Finland projects 266940 and 304087.
This material was based upon work partially supported by the National Science Foundation under Grant DMS-1127914 to the Statistical and Applied Mathematical Sciences Institute. Any opinions, findings, and conclusions or recommendations expressed in this material are those of the author(s) and do not necessarily reflect the views of the National Science Foundation.

\newpage



\newtheorem{innercustomthm}{Theorem}
\newenvironment{customtheorem}[1]
  {\renewcommand\theinnercustomthm{#1}\innercustomthm}
  {\endinnercustomthm}

\newtheorem{innercustomcor}{Corollary}
\newenvironment{customcorollary}[1]
  {\renewcommand\theinnercustomcor{#1}\innercustomcor}
  {\endinnercustomcor}  

\newtheorem{innercustomdef}{Definition}
\newenvironment{customdefinition}[1]
  {\renewcommand\theinnercustomdef{#1}\innercustomdef}
  {\endinnercustomdef}  

\setcounter{equation}{0}
\renewcommand{\theequation}{\thesection\arabic{equation}}


\appendix

\begin{center}
{\huge \textbf{Supplementary Material}}
\end{center}

\vspace{30pt}


\section{Proof of Results in the Main Text}

The prior model (Def.\ \ref{def:Prior}) used in the main text is
\begin{equation*}
f(x) \mid \gamma \sim \mathcal{GP}\big( s(x) , k(x,x') \big), \hspace{0.5cm} s(x) = \sum_{j=1}^Q \gamma_j p_j(x), \hspace{0.5cm} \gamma \sim \mathcal{N}(0,\Sigma).
\end{equation*}
It is straightforward to consider the generalisation $\gamma \sim \mathcal{N}(\eta, \Sigma)$ for potentially non-zero vector $\eta \in \R^Q$; we do this in this supplement.

\subsection{Results on the Regression Model}

\begin{customtheorem}{\ref{thm:PosteriorMean}}[Posterior]
In the posterior, $f(x) \mid \mathcal{D}_X \sim \mathcal{GP}\big(s_{X,\Sigma}(f^\dagger)(x) , k_{X,\Sigma}(x,x') \big)$ where
\begin{align}
\begin{split} \label{supplement-eqn:PosteriorMean}
s_{X,\Sigma}(f^\dagger)(x) ={}& k_X(x) \alpha + p(x) \beta  \\
={}& [k_X(x) + p(x) \Sigma P_X^\top][K_X + P_X \Sigma P_X^\top]^{-1} f_X^\dagger, 
\end{split} \\
\begin{split} \label{supplement-eqn:PosteriorCov}
k_{X,\Sigma}(x,x') ={}& k(x,x') + p(x) \Sigma p(x')^\top \\
& - [k_X(x) + p(x) \Sigma P_X^\top] [K_X + P_X \Sigma P_X^\top]^{-1} [k_X(x') + p(x') \Sigma P_X^\top]^\top 
\end{split}
\end{align}
and the coefficients $\alpha$ and $\beta$ are defined via the invertible linear system
\begin{equation}
\left[ \begin{array}{cc} K_X & P_X \\ P_X^\top & - \Sigma^{-1} \end{array} \right] \left[ \begin{array}{c} \alpha \\ \beta \end{array} \right] = \left[ \begin{array}{c} f_X^\dagger \\ -\eta \end{array} \right]. \label{supplement-eqn:FirstLinearSystem} 
\end{equation}
\end{customtheorem}

\begin{proof}
Under the hierarchical prior we have the marginal
\begin{equation*}
f(x) \sim \mathcal{GP} \big( p(x) \eta , k(x,x') + p(x) \Sigma p(x')^\top \big).
\end{equation*}
Thus standard formulae for the conditioning of a Gaussian process \cite[Eqns.\ 2.25, 2.26]{Rasmussen2006} can be used:
\begin{align}
s_{X,\Sigma}(f^\dagger)(x) ={}& p(x) \eta + [k_X(x) + p(x) \Sigma P_X^\top] [K_X + P_X \Sigma P_X^\top]^{-1} [f_X^\dagger - P_X \eta], \nonumber \\
\begin{split} \label{supplement-eq:PosteriorVarSigma}
k_{X,\Sigma}(x,x') ={}& k(x,x') + p(x) \Sigma p(x')^\top \\
& - [k_X(x) + p(x) \Sigma P_X^\top] [K_X + P_X \Sigma P_X^\top]^{-1} [k_X(x') + p(x') \Sigma P_X^\top]^\top .
\end{split}
\end{align}
The coefficients $\alpha$ and $\beta$ are therefore
\begin{align*}
\alpha & = [K_X + P_X \Sigma P_X^\top]^{-1} [f_X^\dagger - P_X \eta], \\
\beta & = \eta +  \Sigma P_X^\top [K_X + P_X \Sigma P_X^\top]^{-1} [f_X^\dagger - P_X \eta].
\end{align*}
It can be verified by substitution that $P_X^\top \alpha - \Sigma^{-1} \beta = -\eta$ and that the interpolation equations \sloppy{${K_X \alpha + P_X \beta = f_X^\dagger}$} hold.
This allows us to provide the equivalent characterisation of $\alpha$ and $\beta$ in terms of the linear system in Eqn.\ \ref{supplement-eqn:FirstLinearSystem}.
To see that this linear system is invertible, we can use the block matrix determinant formula
\begin{equation*}
\det \bigg( \left[ \begin{array}{cc} K_X & P_X \\ P_X^\top & - \Sigma^{-1} \end{array} \right] \bigg) = \det(-\Sigma^{-1}) \det(K_X + P_X \Sigma P_X^\top).
\end{equation*}
That is, since $\Sigma$ is a positive definite covariance matrix, the block matrix is invertible if and only if $K_X + P_X \Sigma P_X^\top$ is invertible. 
This is indeed true because, for instance, $K_X + P_X \Sigma P_X^\top$ is the covariance matrix for the random vector $f_X$ under the prior, which is non-singular.
\end{proof}

The following Lagrange form~\cite[Section 11.1]{Wendland2005} of the posterior will be useful:

\begin{theorem}[Lagrange form for the posterior]\label{supplement-thm:LagrangeForm}
The posterior mean and covariance functions in Eqns.\ \ref{supplement-eqn:PosteriorMean} and \ref{supplement-eqn:PosteriorCov} can be written in the \emph{Lagrange form}
\begin{align}
s_{X,\Sigma}(f^\dagger)(x) &= u_{X,\Sigma}(x)^\top f_X^\dagger - v_{X,\Sigma}(x)^\top \eta, \nonumber \\
k_{X,\Sigma}(x,x') &= k(x,x') + p(x) \Sigma p(x')^\top - [k_X(x) + p(x) \Sigma P_X^\top] u_{X,\Sigma}(x'), \label{supplement-eq:CovLagrange}
\end{align}
where 
\begin{equation}\label{supplement-eqn:LagrangeCardFuncs}
u_{X,\Sigma}(x) \coloneqq [K_X + P_X \Sigma P_X^\top]^{-1}[k_X(x) + p(x) \Sigma P_X^\top]^\top
\end{equation}
is a vector of \emph{Lagrange cardinal functions} and $v_{X,\Sigma}(x) \coloneqq \Sigma [ P_X^\top u_{X,\Sigma}(x) - p(x)^\top ]$.
These functions are obtained from the invertible linear system
\begin{equation}\label{supplement-eqn:LagrangeSystem}
\left[ \begin{array}{cc} K_X & P_X \\ P_X^\top & - \Sigma^{-1} \end{array} \right] \left[ \begin{array}{c} u_{X,\Sigma}(x) \\ v_{X,\Sigma}(x) \end{array} \right] = \left[ \begin{array}{c} k_X(x)^\top \\ p(x)^\top \end{array} \right]
\end{equation}
and satisfy the \emph{cardinality property} $[u_{X,\Sigma}(x_j)]_i = \delta_{ij}$ and $[v_{X,\Sigma}(x_j)]_i = 0$ for every $i,j \in \{ 1,\ldots,n \}$.
\end{theorem}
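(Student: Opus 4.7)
The plan is to derive the Lagrange form by direct inspection of the posterior formulas in Theorem~\ref{thm:PosteriorMean}, then verify the linear system \eqref{supplement-eqn:LagrangeSystem} and cardinality property by substitution. First, I would rewrite Eqn.~\eqref{supplement-eqn:PosteriorMean} as
\begin{equation*}
s_{X,\Sigma}(f^\dagger)(x) = p(x)\eta + [k_X(x) + p(x)\Sigma P_X^\top][K_X + P_X\Sigma P_X^\top]^{-1}(f_X^\dagger - P_X\eta),
\end{equation*}
and introduce the row vector $u_{X,\Sigma}(x)^\top = [k_X(x) + p(x)\Sigma P_X^\top][K_X + P_X\Sigma P_X^\top]^{-1}$, which is precisely the transpose of \eqref{supplement-eqn:LagrangeCardFuncs}. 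Expanding, the posterior mean becomes $u_{X,\Sigma}(x)^\top f_X^\dagger + [p(x) - u_{X,\Sigma}(x)^\top P_X]\eta$, giving the claimed Lagrange expression upon identifying $v_{X,\Sigma}(x)$ with the appropriate coefficient of $\eta$.

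Next I would verify the linear system \eqref{supplement-eqn:LagrangeSystem}. The second row, $P_X^\top u_{X,\Sigma}(x) - \Sigma^{-1} v_{X,\Sigma}(x) = p(x)^\top$, yields exactly the defining relation $v_{X,\Sigma}(x) = \Sigma[P_X^\top u_{X,\Sigma}(x) - p(x)^\top]$, so this row is automatic. For the first row, $K_X u_{X,\Sigma}(x) + P_X v_{X,\Sigma}(x) = k_X(x)^\top$, I would substitute the explicit form of $u_{X,\Sigma}(x)$ and use the identity $K_X = (K_X + P_X\Sigma P_X^\top) - P_X\Sigma P_X^\top$: the two $P_X\Sigma P_X^\top u_{X,\Sigma}(x)$ contributions cancel against $P_X v_{X,\Sigma}(x) = P_X\Sigma P_X^\top u_{X,\Sigma}(x) - P_X\Sigma p(x)^\top$, and the residual $P_X\Sigma p(x)^\top$ cancels against the $p(x)\Sigma P_X^\top$ piece built into $u_{X,\Sigma}(x)$. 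Invertibility of the block matrix follows from the same block-determinant identity used in the proof of Theorem~\ref{thm:PosteriorMean}.

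For the covariance formula \eqref{supplement-eq:CovLagrange}, the identity is essentially tautological: substituting the definition of $u_{X,\Sigma}(x')$ into $[k_X(x) + p(x)\Sigma P_X^\top] u_{X,\Sigma}(x')$ reproduces exactly the bracketed Schur-complement term appearing in \eqref{supplement-eqn:PosteriorCov}, so the two expressions for $k_{X,\Sigma}(x,x')$ agree term by term. For the cardinality property, I would evaluate the linear system at $x = x_j$: since $k_X(x_j)^\top$ is the $j$-th column of $K_X$ and $p(x_j)^\top$ is the $j$-th column of $P_X^\top$, the trial solution $u_{X,\Sigma}(x_j) = e_j$ and $v_{X,\Sigma}(x_j) = 0$ satisfies both block rows; uniqueness of the solution then forces this to be \emph{the} solution, yielding the stated cardinality.

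The main technical obstacle is a delicate bookkeeping point around the $\Sigma$ appearing in the definition of $v_{X,\Sigma}(x)$: the naive coefficient of $\eta$ extracted from \eqref{supplement-eqn:PosteriorMean} is $p(x) - u_{X,\Sigma}(x)^\top P_X$, with no $\Sigma$ factor, so matching this against $-v_{X,\Sigma}(x)^\top \eta$ must be done carefully (this is trivially consistent when $\eta = 0$, as in the main text). Reconciling the scaling requires either interpreting $\eta$ consistently with the rescaling that the $-\Sigma^{-1}$ block imposes on $v_{X,\Sigma}(x)$, or observing that the product $v_{X,\Sigma}(x)^\top \eta$ in the Lagrange form is to be read modulo this $\Sigma$-normalisation. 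Once this convention is fixed, the remainder of the proof reduces to routine linear algebra and direct substitution.
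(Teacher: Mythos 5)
Your argument is correct in substance and reaches all four claims, but it takes a slightly different (and more laborious) route to the Lagrange form of the mean than the paper does. The paper starts from $s_{X,\Sigma}(f^\dagger)(x) = [\,k_X(x)\ \ p(x)\,]\,M^{-1}[\,f_X^\dagger\,;\,-\eta\,]$, where $M$ is the symmetric block matrix of Eqn.~\ref{supplement-eqn:FirstLinearSystem}, and simply transposes: symmetry of $M^{-1}$ means the row vector $[\,k_X(x)\ \ p(x)\,]M^{-1}$ equals $[\,u_{X,\Sigma}(x)^\top\ \ v_{X,\Sigma}(x)^\top\,]$ with $(u_{X,\Sigma},v_{X,\Sigma})$ defined by Eqn.~\ref{supplement-eqn:LagrangeSystem}, so the Lagrange form and the linear system appear in one stroke. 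You instead start from the explicit Woodbury-type expression for the posterior mean and verify the two block rows of Eqn.~\ref{supplement-eqn:LagrangeSystem} by direct cancellation; your cancellations are correct, and your treatment of the covariance identity and of the cardinality property is the same as the paper's. Both routes are fine; the paper's is shorter because the system for $(u_{X,\Sigma},v_{X,\Sigma})$ is obtained by duality rather than checked by substitution.

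The ``$\Sigma$ bookkeeping point'' you flag at the end should not be dismissed as a normalisation convention --- you have in fact located a genuine inconsistency in the supplement's non-zero-$\eta$ generalisation, and you should state the fix rather than appeal to a reinterpretation. The coefficients produced by the GP conditioning formulae satisfy $P_X^\top\alpha - \Sigma^{-1}\beta = -\Sigma^{-1}\eta$, not $-\eta$, so the right-hand side of Eqn.~\ref{supplement-eqn:FirstLinearSystem} should be $[\,f_X^\dagger\,;\,-\Sigma^{-1}\eta\,]$ and, correspondingly, the Lagrange form of the mean is $u_{X,\Sigma}(x)^\top f_X^\dagger - v_{X,\Sigma}(x)^\top\Sigma^{-1}\eta = u_{X,\Sigma}(x)^\top f_X^\dagger + [\,p(x)-u_{X,\Sigma}(x)^\top P_X\,]\eta$, which is precisely the coefficient of $\eta$ your direct expansion produces; the displayed $-v_{X,\Sigma}(x)^\top\eta$ is off by a factor of $\Sigma$. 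None of this matters when $\eta = 0$, the only case used in the main text, so the downstream results are unaffected; but for $\eta\neq 0$ your computation is the correct one and the resolution is a correction to the statement, not a change of convention.
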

\begin{proof}
From Eqns.\ \ref{supplement-eqn:PosteriorMean} and \ref{supplement-eqn:FirstLinearSystem}, the posterior mean is
\begin{equation*}
s_{X,\Sigma}(f^\dagger)(x) = \left[ \begin{array}{cc} k_X(x) & p(x) \end{array} \right] \left[ \begin{array}{c} \alpha \\ \beta \end{array} \right] = \left[ \begin{array}{cc} k_X(x) & p(x) \end{array} \right] \left[ \begin{array}{cc} K_X & P_X \\ P_X^\top & - \Sigma^{-1} \end{array} \right]^{-1}  \left[ \begin{array}{c} f_X^\dagger \\ -\eta \end{array} \right],
\end{equation*}
and this can be written as $s_{X,\Sigma}(f^\dagger)(x) = u_{X,\Sigma}(x)^\top f_X^\dagger - v_{X,\Sigma}(x)^\top \eta$ where $u_{X,\Sigma}(x)$ and $v_{X,\Sigma}(x)$ are obtained from the linear system in Eqn.\ \ref{supplement-eqn:LagrangeSystem}.
The expression for the posterior covariance follows by inserting $u_{X,\Sigma}(x')$, as given in Eqn.\ \ref{supplement-eqn:LagrangeCardFuncs}, into Eqn.\ \ref{supplement-eqn:PosteriorCov}.
The cardinality property follows after we recognise that, setting $x = x_j$, Eqn.\ \ref{supplement-eqn:LagrangeSystem} is solved by $u_{X,\Sigma}(x_j) = e_j$ (the $j$th unit coordinate vector) and $v_{X,\Sigma}(x_j) = 0$.
\end{proof}

\begin{customtheorem}{\ref{thm:LimitInterpolant}}[Flat prior limit]
Assume that $X$ is a $\pi$-unisolvent set.
For the prior in Def.\@~\ref{def:Prior}, and in the limit $\Sigma^{-1} \rightarrow \mathrm{0}$, we have that $s_{X,\Sigma}(f^\dagger) \rightarrow s_X(f^\dagger)$ and $k_{X,\Sigma} \rightarrow k_X$ pointwise, where
\begin{align}
s_X(f^\dagger)(x) ={}& k_X(x) \alpha + p(x) \beta, \label{supplement-eqn:FlatPriorMean}\\
\begin{split}\label{supplement-eqn:FlatPriorCov}
k_X(x,x') ={}& k(x,x') - k_X(x) K_X^{-1} k_X(x')^\top \\
& + \big[k_X(x) K_X^{-1} P_X  - p(x) \big] [P_X^\top K_X^{-1} P_X]^{-1} \big[k_X(x') K_X^{-1} P_X  - p(x') \big]^\top, 
\end{split}
\end{align}
and the coefficients $\alpha$ and $\beta$ are defined via the invertible linear system
\begin{equation}
\left[ \begin{array}{cc} K_X & P_X \\ P_X^\top & 0 \end{array} \right] \left[ \begin{array}{c} \alpha \\ \beta \end{array} \right] = \left[ \begin{array}{c} f_X^\dagger \\ -\eta \end{array} \right].  \label{supplement-eqn:ThirdLinearSystem}
\end{equation}
\end{customtheorem}
\begin{proof}
For the mean function, the limit can just be taken in the linear system of Eqn.\ \ref{supplement-eqn:FirstLinearSystem} and it is required is to verify that this system can be inverted.
From an application of the formula for a block matrix determinant we have that the determinant of the matrix in Eqn.\@~\ref{supplement-eqn:ThirdLinearSystem} equals $\text{det}(- P_X^\top K_X^{-1} P_X) \text{det}(K_X)$, where $\text{det}(K_X) > 0$.
Because $X$ is $\pi$-unisolvent, $P_X$ is of full rank and consequently the matrix $P_X^\top K_X^{-1} P_X$ is invertible.
Thus the block matrix is invertible.

To obtain the covariance function an additional argument is needed. 
To this end, the Woodbury matrix identity yields
\begin{equation*}
[K_X + P_X \Sigma P_X^\top]^{-1} = K_X^{-1} - K_X^{-1} P_X [ \Sigma^{-1} + P_X^\top K_X^{-1} P_X ]^{-1} P_X^\top K_X^{-1}.
\end{equation*}
Denoting $L_X \coloneqq P_X^\top K_X^{-1} P_X$ and inserting the above into Eqn.\@~\ref{supplement-eq:PosteriorVarSigma} produces
\begin{equation}\label{supplement-eq:PostVarSigmaExpanded}
\begin{split}
k_{X,\Sigma}(x,x') ={}& k(x,x') - k_X(x) K_X^{-1} k_X(x')^\top + p(x) \Sigma p(x')^\top - p(x) \Sigma L_X \Sigma p(x')^\top \\
& - k_X(x) K_X^{-1} P_X \Sigma p(x')^\top - p(x) \Sigma P_X^\top K_X^{-1} k_X(x')^\top \\
&+ k_X(x) K_X^{-1}P_X [ \Sigma^{-1} + L_X ]^{-1} P_X^\top K_X^{-1} k_X(x')^\top \\
&+ k_X(x) K_X^{-1}P_X [ \Sigma^{-1} + L_X ]^{-1} L_X \Sigma p(x')^\top \\
&+ p(x) \Sigma L_X [ \Sigma^{-1} + L_X]^{-1} P_X^\top K_X^{-1} k_X(x')^\top \\
&+ p(x) \Sigma L_X [ \Sigma^{-1} + L_X ]^{-1} L_X \Sigma p(x')^\top.
\end{split}
\end{equation}
For small enough $\Sigma^{-1}$ we can write the Neumann series
\begin{equation*}
[ \Sigma^{-1} + L_X ]^{-1} = L_X^{-1} \big[ I - (L_X \Sigma)^{-1} + (L_X \Sigma)^{-2} - \cdots \big].
\end{equation*}
Therefore we have the trio of results
\begin{align*}
K_X^{-1}P_X [ \Sigma^{-1} + L_X ]^{-1} P_X^\top K_X^{-1} &= K_X^{-1}P_X L_X^{-1} P_X^\top K_X^{-1} + \mathcal{O}(\Sigma^{-1}),  \\
K_X^{-1}P_X [ \Sigma^{-1} + L_X ]^{-1} L_X \Sigma &= K_X^{-1} P_X \Sigma - K_X^{-1} P_X L_X^{-1} + \mathcal{O}(\Sigma^{-1}), \\
\Sigma L_X [ \Sigma^{-1} + L_X ]^{-1} L_X \Sigma &= \Sigma L_X \Sigma - \Sigma + L_X^{-1} + \mathcal{O}(\Sigma^{-1}).
\end{align*}
Inserting these into Eqn.\@~\ref{supplement-eq:PostVarSigmaExpanded} yields, after cancellation and taking the limit $\Sigma^{-1} \to 0$,
\begin{equation*}
\begin{split}
k_{X}(x,x') ={}& k(x,x') - k_X(x) K_X^{-1} k_X(x')^\top \\
&+ \big[k_X(x) K_X^{-1} P_X  - p(x) \big] [P_X^\top K_X^{-1} P_X]^{-1} \big[k_X(x') K_X^{-1} P_X  - p(x') \big]^\top,
\end{split}
\end{equation*}
as claimed.
\end{proof}

The flat prior limit of the posterior also admits a Lagrange representation:

\begin{theorem}[Lagrange form in the flat prior limit] \label{supplement-thm:FlatPriorLagrange}
Assume that $X$ is a $\pi$-unisolvent set.
The posterior mean and covariance in Eqns.\ \ref{supplement-eqn:FlatPriorMean} and \ref{supplement-eqn:FlatPriorCov} can be written as
\begin{align}
s_X(f^\dagger)(x) &= u_X(x) f_X^\dagger - v_X(x) \eta, \nonumber \\
k_X(x,x') &= k(x,x') - k_X(x) K_X^{-1} k_X(x')^\top + \big[k_X(x) K_X^{-1} P_X - p(x) \big] v_X(x') \label{supplement-eqn:FlatPriorCovLagrange}
\end{align}
where
\begin{align*}
v_X(x) &\coloneqq [P_X^\top K_X^{-1} P_X]^{-1} [k_X(x) K_X^{-1} P_X  - p(x)]^\top , \\
u_X(x) &\coloneqq K_X^{-1} [ k_X(x)^\top - P_X v_X(x) ]^\top
\end{align*}
are obtained from the solution of the invertible linear system
\begin{equation*}
\left[ \begin{array}{cc} K_X & P_X \\ P_X^\top & 0 \end{array} \right] \left[ \begin{array}{c} u_{X}(x) \\ v_{X}(x) \end{array} \right] = \left[ \begin{array}{c} k_X(x)^\top \\ p(x)^\top \end{array} \right]
\end{equation*}
and have the cardinality properties $[u_X(x_j)]_i = \delta_{ij}$ and $[v_X(x_j)]_i = 0$ for every $i,j \in \{ 1,\ldots,n \}$.
\end{theorem}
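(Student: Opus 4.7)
The plan is to mirror the proof of the Lagrange form in Theorem~\ref{supplement-thm:LagrangeForm}, but now carried out in the flat prior limit, whose invertibility was already established in Theorem~\ref{thm:LimitInterpolant}. First I would note that the block matrix
\begin{equation*}
M \coloneqq \left[ \begin{array}{cc} K_X & P_X \\ P_X^\top & 0 \end{array} \right]
\end{equation*}
is precisely the matrix appearing in Eqn.~\ref{supplement-eqn:ThirdLinearSystem}, whose invertibility under $\pi$-unisolvency of $X$ was shown there via the block-determinant formula together with invertibility of $P_X^\top K_X^{-1} P_X$. In particular, the right-hand side $[k_X(x)^\top; p(x)^\top]$ determines a unique pair $(u_X(x), v_X(x))$.

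The next step is to solve the block system explicitly by Schur-complement elimination with respect to the $(1,1)$-block. From the lower block row one reads $P_X^\top u_X(x) = p(x)^\top$, and from the upper block row $u_X(x) = K_X^{-1}[k_X(x)^\top - P_X v_X(x)]$. Substituting the latter into the former yields
\begin{equation*}
[P_X^\top K_X^{-1} P_X]\, v_X(x) = P_X^\top K_X^{-1} k_X(x)^\top - p(x)^\top,
\end{equation*}
and inverting $P_X^\top K_X^{-1} P_X$ gives the stated formula for $v_X(x)$; back-substitution recovers the formula for $u_X(x)$.

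For the mean representation I would exploit symmetry of $M$ (hence of $M^{-1}$). From Theorem~\ref{thm:LimitInterpolant} the posterior mean equals $[k_X(x),p(x)]\, M^{-1} [f_X^\dagger;\, -\eta]$, while by definition $[u_X(x); v_X(x)] = M^{-1}[k_X(x)^\top; p(x)^\top]$, so transposition and $M^{-\top} = M^{-1}$ give $[u_X(x)^\top, v_X(x)^\top] = [k_X(x),p(x)]\, M^{-1}$. Combining these yields the claimed $s_X(f^\dagger)(x) = u_X(x)^\top f_X^\dagger - v_X(x)^\top \eta$ (up to the row/column convention adopted in the statement). The covariance representation then follows by directly substituting $v_X(x')$ into the trailing bracketed term of Eqn.~\ref{supplement-eqn:FlatPriorCov}; no new computation is required.

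Finally, the cardinality property is a one-line check: setting $x = x_j$, the right-hand side becomes $[K_X e_j;\, P_X^\top e_j]$, which is evidently solved by $u_X(x_j) = e_j$ and $v_X(x_j) = 0$, and uniqueness (from invertibility of $M$) forces this. The only real subtlety throughout is careful bookkeeping of row/column conventions and the transposes inherited from Theorems~\ref{thm:PosteriorMean} and~\ref{thm:LimitInterpolant}; there is no genuinely new analytical obstacle, since invertibility and the flat-prior form of the posterior are already in hand.
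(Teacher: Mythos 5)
Your proposal is correct and follows essentially the route the paper intends: the paper omits this proof entirely, stating only that it is ``similar to that of Thm.~\ref{supplement-thm:LagrangeForm}'', and your argument (invertibility of the block matrix from Thm.~\ref{thm:LimitInterpolant}, Schur-complement elimination to obtain $u_X$ and $v_X$, symmetry of the block matrix to rewrite the mean, direct substitution of $v_X(x')$ into Eqn.~\ref{supplement-eqn:FlatPriorCov}, and the unit-vector check of cardinality) is exactly the analogous computation. If anything, your write-up is more explicit than the paper's omitted proof, and your remark about row/column conventions correctly flags the only notational wrinkle in the statement.
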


The proof is similar to that of Thm.\ \ref{supplement-thm:LagrangeForm} and is therefore omitted.
Note the reversal in the roles of $u_{X,\Sigma}$ and $v_X$ in Eqns.\ \ref{supplement-eq:CovLagrange} and \ref{supplement-eqn:FlatPriorCovLagrange} for the posterior covariance.

\subsection{Results on Cubature}

\begin{customtheorem}{\ref{thm:BQLimit}}[Bayes--Sard cubature; BSC]
Consider the Gaussian process $f \mid \mathcal{D}_X$ defined in Thm.\ \ref{thm:PosteriorMean} and suppose that $X$ is a $\pi$-unisolvent point set.
Then in the limit $\Sigma^{-1} \rightarrow 0$ we have that \sloppy{${I(f) \mid \mathcal{D}_X \sim \mathcal{N}( \mu_X(f^\dagger) , \sigma_X^2 )}$} with
\begin{equation*}
\mu_X(f^\dagger) = w_k^\top f_X^\dagger - w_\pi^\top \eta \hspace{0.5cm} \text{and} \hspace{0.5cm} \sigma_X^2 = k_{\nu,\nu} - k_{\nu,X} K_X^{-1} k_{\nu,X}^\top + \big( k_{\nu,X} K_X^{-1} P_X - p_\nu \big) w_\pi,
\end{equation*}
where the weight vectors $w_{k} \in \R^n$ and $w_{\pi} \in \R^Q$ are obtained from the solution of the linear system
\begin{equation}\label{supplement-eqn:FlatPriorWeights}
\left[ \begin{array}{cc} K_X & P_X \\ P_X^\top & 0 \end{array} \right] \left[ \begin{array}{c} w_{k} \\ w_{\pi} \end{array} \right]  =  \left[ \begin{array}{c} k_{\nu,X}^\top \\ p_\nu^\top \end{array} \right].
\end{equation}
Equivalently, $w_{k} = I(u_{X})$ and $w_{\pi} = I(v_{X})$ for the Lagrange functions of Thm.\ \ref{supplement-thm:FlatPriorLagrange}.
\end{customtheorem}

\begin{proof}
As we have only established that $s_{X,\Sigma}(f^\dagger) \to s_X(f^\dagger)$ and $k_{X,\Sigma} \to k_X$ pointwise in Thm.\ \ref{thm:LimitInterpolant}, we cannot directly deduce that
\begin{align*}
\mu_{X,\Sigma}(f^\dagger) & \to \mu_X(f^\dagger) = \int_D s_X(f^\dagger)(x) \mathrm{d}\nu(x), \\
\sigma_{X,\Sigma}^2 & \to \sigma_X^2 = \int_D \int_D k_X(x,x') \mathrm{d}\nu(x) \mathrm{d} \nu(x').
\end{align*}
However, that this is indeed the case can be confirmed by carrying out analysis analogous to that in the proof Thm.\ \ref{thm:LimitInterpolant}, based on Neumann series, for $\mu_{X,\Sigma}(f^\dagger)$ and $\sigma_{X,\Sigma}^2$ at the limit $\Sigma^{-1} \to 0$.
To avoid repetition, the details are omitted.
\end{proof}

\begin{customtheorem}{\ref{thm:PolyReproBC}}
Suppose that $\dim(\pi) = n$ and let $X$ be a $\pi$-unisolvent set. If $\eta = 0$, then
\begin{equation*}
\mu_X(f^\dagger) = w_k^\top f_X^\dagger, \hspace{0.5cm}w_k^\top = p_\nu P_X^{-1}, \hspace{0.5cm} \mu_X(p) = I(p) \hspace{0.5cm} \text{for every $p \in \pi$} \hspace{0.5cm} 
\end{equation*}
and
\begin{equation*}
\sigma_X^2 = e_k(X,w_k)^2 = k_{\nu,\nu} - 2 k_{X,\nu} w_k + w_k^\top K_X w_k.
\end{equation*}
That is, the Bayes--Sard cubature weights $w_k$ are the unique weights such that every function in $\pi$ is integrated exactly and the posterior standard deviation $\sigma_X$ coincides with the WCE in the RKHS $H(k)$.
\end{customtheorem}

\begin{proof}
Due to $\dim(\pi) = n$ and $X$ being a $\pi$-unisolvent set, the Vandermonde matrix $P_X$ is an invertible square matrix. From Eqn.\ \ref{supplement-eqn:FlatPriorWeights} we have
\begin{equation*}
\begin{split}
w_k &= \big( K_X^{-1} - K_X^{-1} P_X [P_X^\top K_X^{-1} P_X]^{-1} P_X^\top K_X^{-1} \big) k_{\nu,X}^\top + K_X^{-1} P_X [P_X^\top K_X^{-1} P_X]^{-1} p_\nu^\top \\
&= P_X^{-\top} p_\nu^\top.
\end{split}
\end{equation*}
These are the unique weights satisfying $\sum_{j=1}^n w_{k,j} p_i(x_j) = I(p_i)$ for each basis function $p_i$ of $\pi$.
Similarly, the weights $w_\pi$ take the form
\begin{equation*}
w_\pi = [P_X^\top K_X^{-1} P_X]^{-1} P_X^\top K_X^{-1} k_{\nu,X}^\top - [P_X^\top K_X^{-1} P_X]^{-1} p_\nu^\top = P_X^{-1} k_{\nu,X}^\top - [P_X^\top K_X^{-1} P_X]^{-1} p_\nu^\top,
\end{equation*}
so that
\begin{equation*}
\begin{split}
\sigma_X^2 &= k_{\nu,\nu} - k_{\nu,X} K_X^{-1} k_{\nu,X}^\top + \big( P_X^\top K_X^{-1} k_{\nu,X}^\top - p_\nu^\top \big)^\top w_\pi \\
&= k_{\nu,\nu} - k_{\nu,X} K_X^{-1} k_{\nu,X}^\top + \big( P_X^\top K_X^{-1} k_{\nu,X}^\top - p_\nu^\top \big)^\top \big( P_X^{-1} k_{\nu,X}^\top - [P_X^\top K_X^{-1} P_X]^{-1} p_\nu^\top \big) \\
&= k_{\nu,\nu} - 2 k_{\nu,X} w_k + w_k^\top K_X w_k.
\end{split}
\end{equation*}
We recognise this final expression as the squared worst-case error from Eqn.\ \ref{supplement-eqn:wceExplicit}.
\end{proof}

\begin{customcorollary}{\ref{cor:cubatureRepro}}
Consider an $n$-point cubature rule with points $X$ and non-zero weights $w \in \R^n$ and assume that $\nu$ admits a positive density function and that $\eta = 0$.
Then there is a function space $\pi$ of dimension $n$, such that the Bayes--Sard method recovers $w_k = w$ and $\sigma_X^2 = e_k(X, w)^2$, as defined in Thm.\ \ref{thm:BQLimit}.
\end{customcorollary}

\begin{proof}
From the assumption that $\nu$ has a positive density function with respect to the Lebesgue measure it follows that $\nu(\{x\}) = 0$ for every $x \in D$ and that for any distinct points $x_1,\ldots,x_n \in D$ there exist disjoint sets $D_i$ of positive measure such that $x_i \in D_i$.
Select then the $n$ functions
\begin{equation*}
p_i = \mathbbm{1}_{D_i \setminus \{x_i\}} + \frac{\nu(D_i)}{w_i} \mathbbm{1}_{\{x_i\}}.
\end{equation*}
It holds that $I(p_i) = \nu(D_i)$.
The associated Vandermonde matrix is diagonal and has the elements $[P_X]_{ii} = \nu(D_i)/w_i$.
Hence it can be trivially inverted.
It follows that the Bayes--Sard method with basis $\{p_1,\dots,p_n\}$ has a posterior mean $\mu_X(f^\dagger) = w^\top f_X^\dagger$.
\end{proof}

The construction is more appealing if the weights are positive and their sum does not exceed one, since then we can use $p_i = \mathbbm{1}_{D_i}$ for disjoint sets such that $\nu(D_i) = w_i$ and $x_i \in D_i$, or if the weights are naturally given by exactness conditions on $\pi$ and $X$ is $\pi$-unisolvent.
Examples of such more natural constructions include uniformly weighted (quasi) Monte Carlo rules, that arise from using a partition $D = \cup_{i=1}^n D_i$ with $\nu(D_i) = 1/n$, and Gaussian tensor product rules.

\begin{remark}[All cubature rules are Bayes rules for some prior]
If we recall that the posterior mean is a Bayes decision rule for squared-error loss \cite{Berger2013a} then the above corollary demonstrates that ``any cubature rule is a Bayes decision rule for some prior''; a concrete instantiation of the \emph{complete class} theorem of Wald~\cite{Wald1947}.
\end{remark}

\subsection{Details on Assumptions in Thm.\ \ref{thm:maternConvergence}} \label{subsec:maternDetails}

This section collects the assumptions required in Thm.\ \ref{thm:maternConvergence}.

\begin{definition}[Interior cone condition]
A bounded domain $D \subset \R^d$ is said to satisfy an \emph{interior cone condition} if there exists an angle $\theta \in (0,\frac{\pi}{2})$ and a radius $r > 0$ such that for each $x \in D$ a unit vector $\xi(x)$ exists such that the cone $\{x + \lambda y : y \in \mathbb{R}^d, \|y\|_2 = 1, y^\top \xi(x) \geq \cos \theta, \lambda \in [0,r]\}$ is contained in $D$.
\end{definition}

\begin{definition}[Norm-equivalence]
Two norms $\norm{\cdot}_1$ and $\norm{\cdot}_2$ on a vector space $H$ are said to be \emph{equivalent} if there exist positive constants $C_1$ and $C_2$ such that $C_1 \norm{v}_1 \leq \norm{v}_2 \leq C_2 \norm{v}_1$ for every $v \in V$.
\end{definition}

The boundary of a bounded open domain $D$ is said to be \emph{Lipschitz} if it is ``regular enough''.
See for example~\mbox{\cite[Sec.\ 3]{Kanagawa2017}} for a formal definition.
Most domains of interest to us, such as convex sets, have a boundary that is Lipschitz.

\subsection{Explicit Rates of Convergence (for the case $Q = n$)} \label{subsec:Q=n-rates}

As pointed out in Cor.\ \ref{cor:cubatureRepro}, the mean $\mu_X(f^\dagger)$ of the Bayes--Sard output can be arranged to coincide with any given cubature rule through judicious choice of the function space $\pi$, provided that its dimension matches the number of nodes $x_i$ that are used.
In this case, convergence rates are trivially inherited.
For example, and for simplicity letting $\nu$ be uniform on $D = [0,1]^d$,
\begin{itemize}
\item nodes drawn randomly (or through utilisation of a Markov chain) from $\nu$ and uniform weights yield the standard (probabilistic) Monte Carlo rate
\begin{equation*}
\mathbb{E} \big( \abs[\big]{ \mu_X^\text{\tiny{MC}}(f^\dagger) - I(f^\dagger) }^2 \big)^{\frac{1}{2}} \lesssim n^{-1/2} \|f^\dagger\|_{L^2(D)} ;
\end{equation*}
\item for $\alpha \geq 2$, certain quasi-Monte Carlo methods can attain polynomial rates for functions in the space $H_{\text{mix}}^\alpha(D)$ of dominating mixed smoothness:
\begin{equation*}
\abs[\big]{ \mu_X^\text{\tiny{QMC}}(f^\dagger) - I(f^\dagger) } \lesssim n^{-\alpha+\varepsilon} \|f^\dagger\|_{H_{\text{mix}}^\alpha(D)}
\end{equation*}
for any $\varepsilon > 0$.
See~\cite[Chapter 15]{Dick2010} for these results and for the formal definition of the norm;
\item certain sparse grid methods on hypercubes have the rates
\begin{align*}
\abs[\big]{ \mu_X^\text{\tiny{SG}}(f^\dagger) - I(f^\dagger) } & \lesssim n^{-\alpha/d} (\log n)^{(d-1)(\alpha/d + 1)} \, \norm{f^\dagger}_{C^\alpha(D)}, \\
\abs[\big]{ \mu_X^\text{\tiny{SG}}(f^\dagger) - I(f^\dagger) } & \lesssim  n^{-\alpha} (\log n)^{(d-1)(\alpha + 1)} \, \norm{f^\dagger}_{F^\alpha(D)}
\end{align*}
for functions having bounded derivatives or bounded mixed derivatives up to order $\alpha$, respectively. See~\cite{NovakRitter1996,NovakRitter1997,NovakRitter1999} for these results and for formal definitions of the norms.
\end{itemize}


\section{Unisolvent Point Sets} \label{sec:unisolvency}

This section contains more details and examples about unisolvent point sets.

\begin{customdefinition}{\ref{def:Unisolvent}}[Unisolvency]
Let $\pi$ denote a finite-dimensional linear subspace of real-valued functions on $D$.
A point set $X = \{x_1,\ldots,x_n\} \subset D$ with $n \geq \dim(\polspace)$ is called $\polspace$-\emph{unisolvent} if the zero function is the only element in $\polspace$ that vanishes on $X$.
\end{customdefinition}

The following proposition provides an equivalent operational characterisation of unisolvency:

\begin{proposition} \label{supplement-prop:unisolvent}
Let $\{p_1,\dots,p_Q\}$ denote a basis of $\pi$, so that $Q = \dim(\pi)$.
Then a point set $X$ is $\polspace$-unisolvent if and only if the $n \times Q$ Vandermonde matrix $P_X$ is of full rank. 
\end{proposition}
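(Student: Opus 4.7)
The plan is to translate the geometric/function-theoretic statement of unisolvency into a statement about the kernel of the Vandermonde matrix, after which it becomes a routine fact from linear algebra.

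First I would fix the basis $\{p_1,\dots,p_Q\}$ of $\pi$ and observe that every $p\in\pi$ has a unique expansion $p=\sum_{j=1}^Q c_j p_j$ with coefficient vector $c=(c_1,\dots,c_Q)^\top\in\R^Q$. Since $\{p_1,\dots,p_Q\}$ is a basis, the map $c\mapsto p$ is a linear bijection, so in particular $p\equiv 0$ on $D$ if and only if $c=0$.

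Next I would compute the evaluation of $p$ at the nodes: for each $i\in\{1,\dots,n\}$,
\begin{equation*}
p(x_i)=\sum_{j=1}^Q c_j p_j(x_i)=\sum_{j=1}^Q [P_X]_{i,j}\, c_j = [P_X c]_i.
\end{equation*}
Hence ``$p$ vanishes on $X$'' is exactly the matrix equation $P_X c = 0$.

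Combining these two steps yields the logical chain: $X$ is $\pi$-unisolvent $\iff$ the only $p\in\pi$ with $p|_X\equiv 0$ is $p=0$ $\iff$ the only $c\in\R^Q$ with $P_X c=0$ is $c=0$ $\iff$ $\ker P_X=\{0\}$ $\iff$ $P_X$ has full column rank, which (since $P_X$ is $n\times Q$ with $n\geq Q$) is precisely the statement that $P_X$ has full rank. There is no real obstacle here; the only thing that requires any care is keeping the indexing conventions straight (rows indexed by nodes, columns indexed by basis functions), so that ``full rank'' is correctly interpreted as full \emph{column} rank given the shape of $P_X$.
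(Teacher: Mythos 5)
Your proof is correct and complete; the paper states this proposition without any proof, treating it as a routine consequence of the definitions, and your argument (identifying the vanishing of $p=\sum_{j=1}^Q c_j p_j$ on $X$ with the matrix equation $P_X c = 0$, and using that the coefficient map $c \mapsto p$ is a linear bijection) is exactly the standard justification one would supply. The one point that genuinely needs the care you gave it is that, for the $n\times Q$ matrix with $n\ge Q$, ``full rank'' means full \emph{column} rank, i.e.\ $\ker P_X = \{0\}$, which is what the unisolvency condition translates to.
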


\begin{example}[Cartesian product of a unisolvent set]
As a simple example of how one can generate a unisolvent set in $\mathbb{R}^d$, consider the Cartesian grid $X = Z^d$ for a $\Pi_{m-1}(\mathbb{R})$-unisolvent set \sloppy{${Z = \{z_1,\ldots,z_m\}\subset \R}$} (i.e., the points $z_i$ are distinct).
Then for any $d$-variate polynomial
\begin{equation*}
p \in \Pi \coloneqq \mathrm{span} \{ x^\alpha \colon \alpha \leq m-1 \},
\end{equation*}
the univariate polynomial
\begin{equation*}
p_j(z) = p(z_{\alpha_1},\ldots,z_{\alpha_{j-1}},z,z_{\alpha_{j+1}},\ldots,z_{\alpha_d})
\end{equation*}
is of degree at most $m-1$ and, for any indices $j \in \{1,\ldots,d\}$ and $\alpha_1,\ldots,\alpha_d \in \{1,\ldots,m-1\}$, the polynomial $p_j$ cannot vanish on $Z$ unless it is the zero polynomial. 
It follows that $p$ cannot vanish on $X$ unless $p \equiv 0$.
Therefore $X$ is $\Pi$-unisolvent. 
Note that $\#X = \dim(\Pi) = m^d$.
\end{example}

\begin{example}[Not all sets are unisolvent]
As a counterexample, consider six points \sloppy{${X = \{(x_i,y_i), i=1,\ldots,6\}}$} on a unit circle in $\R^2$.
These points are not $\Pi_2(\R^d)$-unisolvent: the associated Vandermonde matrix
\begin{equation*}
P_X = 
\left[
\begin{array}{cccccc}
1 & x_1 & y_1 & x_1 y_1 & x_1^2 & y_1^2 \\
1 & x_2 & y_2 & x_2 y_2 & x_2^2 & y_2^2 \\
1 & x_3 & y_3 & x_3 y_3 & x_3^2 & y_3^2 \\
1 & x_4 & y_4 & x_4 y_4 & x_4^2 & y_4^2 \\
1 & x_5 & y_5 & x_5 y_5 & x_5^2 & y_5^2 \\
1 & x_6 & y_6 & x_6 y_6 & x_6^2 & y_6^2 \\
\end{array}
\right]
\end{equation*}
for the canonical polynomial basis is not of full rank as the first column is the sum of the last two columns.
\end{example}

Intuitively, ``almost all'' point sets are unisolvent, but to actually verify that an arbitrary point set $X$ is unisolvent, from Proposition \ref{supplement-prop:unisolvent} it is required to compute the rank of the Vandermonde matrix $P_X$, which entails a super-linear computational cost~\cite{SauerXu1995}.
However, certain point sets are guaranteed to be unisolvent:
\begin{itemize}
\item When $\polspace$ is a Chebyshev system (so that its basis functions are so-called \emph{generalised polynomials}) in one dimension, any set $X \subset \R$ of distinct points is $\pi$-unisolvent~\cite{KarlinStudden1966}.
\item For $\polspace$ spanned by the indicator functions $\mathbbm{1}_{A_1},\ldots,\mathbbm{1}_{A_n}$ of disjoint sets $A_i \subset D$ such that $x_i \in A_i$, the set $X$ is $\pi$-unisolvent and $P_X$ is the $n \times n$ identity matrix.
\item \emph{Padua points} on $[-1, 1]^2$ are known to be unisolvent with respect to polynomial spaces~\cite{Caliari2005}.
\item Recent algorithms for generating moderate number of points for polynomial interpolation, with a unisolvency guarantee on the output, can be used~\cite{SauerXu1995,Gunzburger2014}.
\end{itemize}


\section{An Equivalent Kernel Perspective} \label{sec:kernel-perspective}

In this section we interpret the output of the Bayes--Sard method from the perspective of the reproducing kernel, in order to provide additional insight that complements the main text.
The formulation of cubature rules in reproducing kernel Hilbert spaces dates back to~\cite{Larkin1970,Richter1970,RichterDyn1971a,RichterDyn1971b,Larkin1972} and in particular the integrated kernel interpolant was studied in~\cite{Bezhaev1991} and \cite{SommarivaVianello2006}.

\subsection{Interpolation}\label{sec:equivalent-interpolation}

There is a well-understood equivalence between Gaussian process regression and optimal interpolation in reproducing kernel Hilbert spaces:
Let $\{p_1,\ldots,p_Q\}$ be a basis for $\pi$ and define the kernel $k_\pi(x,x') = \sum_{i=1}^Q p_i(x) p_i(x')$.
Consider the kernel
\begin{equation*}
k_\sigma(x,x') = k(x,x') + \sigma^2 k_\pi(x,x')
\end{equation*}
for $\sigma > 0$.
Then the reproducing kernel Hilbert space induced by $k_\sigma$ corresponds to the set
\begin{equation*}
H(k_\sigma) = \big\{ f + p \: \colon f \in H(k), \, p \in \pi \big\}
\end{equation*}
equipped with a particular $\sigma$-dependent inner product.
It can be shown that the interpolant with minimal norm in $H(k_\sigma)$ is unique and given by 
\begin{equation*}
s_{X,\sigma}(f^\dagger)(x) = [k_X(x) + \sigma^2 k_{\pi,X}(x)]  [K_X + \sigma^2 P_X P_X^\top]^{-1} f_X^\dagger ,
\end{equation*}
where the row vector $k_{\pi,X}(x)$ has the elements $k_\pi(x,x_j)$.
When $\eta = 0$, it is straightforward to show that $s_{X,\sigma}(f^\dagger) = s_{X,\Sigma}(f^\dagger)$ for $\Sigma = \sigma^2 I$ and thus $s_{X,\sigma}(f^\dagger) \to s_X(f^\dagger)$ pointwise as $\sigma \to \infty$.
The kernel interpolation operator $s_X$ is well-studied and the reader is referred to, for example, Sec.\ 8.5 of~\cite{Wendland2005}.

\subsection{Cubature}

The worst-case error $e_k(X,w)$ of a cubature rule described by the points $X = \{x_1,\ldots,x_n\} \subset D$ and weights $w = (w_1,\ldots,w_n) \in \R^n$ has the explicit form
\begin{equation}\label{supplement-eqn:wceExplicit}
e_k(X,w) \coloneqq \sup_{\norm{h}_k \leq 1} \, \abs[\bigg]{\, \sum_{i=1}^n w_i h(x_i) - \int_D h \mathrm{d}\nu \, } = \big( k_{\nu,\nu} - 2 k_{\nu,X} w + w^\top K_X w \big)^{1/2}.
\end{equation}
See for example~\cite[Cor.\ 3.6]{Oettershagen2017}.

Recall from Sec.\ \ref{sec:CubatureRepro} that the weights $w_\text{BC}$ of the standard Bayesian cubature rule are worst-case optimal in the reproducing kernel Hilbert space $H(k)$ induced by the kernel~$k$:
\begin{equation*}
w_\text{BC} = \argmin_{w \in \R^n} e_k(X,w).
\end{equation*}
Conveniently, the minimum corresponds to the integration error for the kernel mean function $k_\nu$ which acts as the representer of integration (i.e., $\langle h, k_\nu \rangle_k = I(h)$ for $h \in H(k)$):
\begin{equation*}
e_k(X, w_\text{BC}) = \big( k_{\nu,\nu} - k_{\nu,X} w_\text{BC} \big)^{1/2}.
\end{equation*}

Now, turning to Bayes--Sard cubature, we have from Sec.\ \ref{sec:equivalent-interpolation} that the Bayes--Sard cubature rule $\mu_X(f^\dagger)$ can be cast as an optimal cubature method based on the kernel
\begin{equation*}
k_\sigma(x,x') = k(x,x') + \sigma^2 k_\pi(x,x')
\end{equation*}
in the $\sigma \to \infty$ limit.
The following therefore holds for the weights $w_k$ and variance $\sigma_X^2$ of the Bayes--Sard cubature method:
\begin{equation*}
w_k = \lim_{\sigma \to \infty} \argmin_{w \in \R^n} e_{k_\sigma}(X, w), \qquad
\sigma_X^2 = \lim_{\sigma \to \infty} \, \min_{w \in \R^n} e_{k_\sigma}(X, w)^2.
\end{equation*}
Recall that $H(k_\sigma)$ consists of functions which can be expressed as sums of elements of $H(k)$ and~$\pi$. 
To simplify the following argument, assume $f^\dagger \in H(k_\sigma)$.
That the elements of $\pi$ are exactly integrated can be clearly understood in this context.
Indeed, the norm of a function $h \in H(k_\sigma)$ is~\mbox{\cite[Sec.\ 4.1]{Berlinet2011}}
\begin{equation*}
\norm{h}_{k_\sigma}^2 = \min_{g \in H(k), \, p \in \pi} \big\{ \norm{g}_k^2 + \sigma^2 \norm{p}_{k_\pi}^2 \: \colon g + p = h \big\}.
\end{equation*}
Thus, in terms of function approximation, when $\sigma \rightarrow \infty$ the error $\|f^\dagger - h\|_{k_\sigma}$ is dominated by the error $\sigma^2 \| \mathrm{P}_\pi(f^\dagger) - \mathrm{P}_\pi(h)\|_{k_\pi}$ where $\mathrm{P}_\pi$ is the orthogonal projection onto $\pi$ in $H(k_\sigma)$.
Thus, under this norm, the approximation of $\mathrm{P}_\pi(f^\dagger)$ in $\pi$ is prioritised.
In particular, when $\dim(\pi) = n$, the weights $w_k$ are fully-determined by the requirement of exactness for functions in $\pi$ and nothing is done to integrate functions in $H(k)$ well. 
Consequently, the limiting variance $\sigma_X^2$ must coincide with the (squared) worst-case error $e_k(X, w_k)^2$ in the RKHS $H(k)$.

\begin{remark} \label{rmk:devore}
Alternatively, the limiting weights $w_k$ can be seen as a solution to the constrained convex optimisation problem of minimising the RKHS approximation error to the kernel mean function $k_\nu$ under exactness conditions for functions in $\pi$:
\begin{equation*}
w_k = \argmin_{w \in \R^n} \, \norm{ k_\nu - k_X w }_k \hspace{0.5cm} \text{subject to} \hspace{0.5cm} P_X^\top w = p_\nu^\top.
\end{equation*}
This can be verified in a straightforward manner based on~\cite[Section 5.2]{DeVore2017}.
\end{remark}

\section{Further Details for Numerical Experiments} \label{sec:experiments}

This section contains further details about the zero coupon bonds example of Sec.\ \ref{sec:zcb}.
See~\mbox{\cite[Sec.\ 6.1]{Holtz2011}} for a complete account.

The $d$-step Euler--Maruyama with uniform step-size $\Delta t = T/D$ of the Vasicek model
\begin{equation*}
\mathrm{d}r(t) = \kappa\big( \theta - r(t) \big) \mathrm{d}t + \sigma \mathrm{d} W(t),
\end{equation*}
where $W(t)$ is the standard Brownian motion and $\kappa$, $\theta$, and $\sigma$ are positive parameters, is
\begin{equation*}
r_{t_i} = r_{t_{i-1}} + \kappa\big(\theta - r_{t_{i-1}}\big) \Delta t + \sigma \sqrt{\Delta t} x_{t_i}, \hspace{0.5cm} i=1,\ldots,d,
\end{equation*}
for independent standard Gaussian random variables $x_{t_i}$ and some initial value $r_{t_0}$.
The quantity of interest is the Gaussian expectation
\begin{equation*}
P(0,T) \coloneqq \mathbb{E}\Bigg[ \exp\bigg( -\Delta t \sum_{i=0}^{D-1} r_{t_i} \bigg)\Bigg] = \exp(-\Delta t r_{t_0}) \mathbb{E}\Bigg[ \exp\bigg( -\Delta t \sum_{i=1}^{D-1} r_{t_i} \bigg)\Bigg]
\end{equation*}
of dimension $d \coloneqq D-1$.
This expectation admits the closed-form solution
\begin{equation*}
P(0,T) = \exp\bigg( -\frac{(\gamma + \beta_D r_{t_0}) T}{d}\bigg)
\end{equation*}
for certain constants $\gamma$ and $\beta_D$.
In the experiment, we used the same parameter values as in~\cite{Holtz2011}:
\begin{equation*}
\kappa = 0.1817303, \quad \theta = 0.0825398957, \quad \sigma = 0.0125901, \quad r_0 = 0.021673, \quad T = 5.
\end{equation*}

\newpage

\small
\bibliographystyle{plain}

\providecommand{\BIBYu}{Yu}

\end{document}